\newtheorem{theorem}{Theorem}
\newtheorem{lemma}{Lemma}
\newtheorem{corollary}{Corollary}
\newcommand{\ket}[1]{\left\vert#1\right\rangle}
\newcommand{\bra}[1]{\left\langle#1\right\vert}
\def\bra#1{\langle #1|}
\def\ket#1{\left|#1 \right>}
\def\Cl{{\rm Cl}}
\def\bra#1{\langle #1|}
\def\ket#1{\left|#1 \right>}
\begin{document}
\title{Nonstabilizerness without Magic: Classically Simulatable Quantum States That Are Indistinguishable by Classically Simulatable Quantum Circuits}
\author{Hyukjoon Kwon}
\email{hjkwon@kias.re.kr}
\affiliation{School of Computational Sciences, Korea Institute for Advanced Study, Seoul 02455, Korea}
\begin{abstract}
Quantum state discrimination plays a central role in defining the possible and impossible operations through a restricted class of quantum operations. A seminal result by Bennett \textit{et al.} [Phys. Rev. A \textbf{59}, 1070 (1999)] demonstrates the existence of a set of mutually orthogonal separable quantum states that cannot be perfectly distinguished by local operations and classical communication, a phenomenon known as nonlocality without entanglement. We show that a parallel structure exists in the resource theory of magic: there exists a set of mutually orthogonal stabilizer states that cannot be perfectly distinguished by stabilizer operations, which consist of Clifford gates, measurements in the computational basis, and additional ancillary stabilizer states. This phenomenon, which we term \textit{nonstabilizerness without magic}, reveals a fundamental asymmetry between the preparation of classically efficiently simulatable stabilizer states and their discrimination, which cannot be performed by classically efficiently simulatable quantum circuits. We further discuss the implications of our findings for quantum data hiding, the no-cloning of stabilizer states, and unconditional verification of non-Clifford gates.
\end{abstract}
\maketitle

\section{Introduction}
Quantum state discrimination~\cite{helstrom1969quantum, holevo1974remarks, barnett2009quantum, bae2015quantum} plays a central role in quantum information processing. In particular, the ability to discriminate quantum states allows one to characterize which operations are possible or impossible within a restricted class of quantum operations~\cite{Bennet99quantum, divincenzo2003unextendible, fan2004distinguishability, Takagi19general, Lami21quantum, Zu24limitations}. A seminal result in this direction, discovered by Bennett \textit{et al.}~\cite{Bennet99quantum}, is the existence of a set of mutually orthogonal separable quantum states that cannot be perfectly distinguished using local operations and classical communication (LOCC)—a phenomenon known as \textit{nonlocality without entanglement}. While this reveals a strict distinction between quantum entanglement and operational nonlocality (i.e., LOCC), it has also led to a new application in quantum information processing: quantum data hiding~\cite{divincenzo2002quantum, matthews2009distinguishability, lami2018ultimate, Ha24nonlocal}.

Along with recent theoretical and experimental progress in quantum computation, the resource theory of magic~\cite{Mari12, veitch2012negative, veitch2014resource, Howard14contextuality, Bravyi16improved, Howard17application, seddon2019quantifying, wang2019quantifying, bravyi2019simulation, Wang20efficiently, beverland2020lower, Seddon21quantifying, Leone22stabilizer} has been widely studied in the context of the classical simulatability of quantum circuits and the quantification of resources for fault-tolerant quantum computing. In this framework, stabilizer states are regarded as free states that do not contain resources (i.e., magic) for quantum computational advantage, and stabilizer operations are regarded as free operations. Stabilizer operations transform one stabilizer state into another, and such quantum processes can be efficiently simulated on classical computers, as established by the Gottesman–Knill theorem~\cite{gottesman1998heisenberg, Aaronson04improved}. Moreover, non-Clifford gates cannot be implemented transversally within a stabilizer-based quantum error correction code~\cite{gottesman1997stabilizer}, as stated by the Eastin–Knill theorem~\cite{Eastin09restrictions}, making them the most resource-intensive components of fault-tolerant quantum computing~\cite{Howard17application}.

In this work, we show that a structure parallel to nonlocality without entanglement also exists in the resource theory of magic, which we term \textit{nonstabilizerness without magic}. In particular, we demonstrate the existence of a set of mutually orthogonal stabilizer states (i.e., without magic) that cannot be perfectly discriminated by stabilizer operations. We provide an explicit example of such states in the three-qubit case and generalize this construction to larger numbers of qubits. We also present an entropic analysis, highlighting the key idea that the information gain is strictly less than the disturbance caused by measurements implemented with stabilizer operations.

Our results reveal a fundamental asymmetry between the cost of preparing stabilizer states and that of discriminating them, with explicit implications for quantum resource theory and quantum foundations. These include the separation between operational and axiomatic free operations in the resource theory of magic~\cite{heimendahl2022axiomatic}, as well as the impossibility of cloning stabilizer states via stabilizer operations. Furthermore, our results open new avenues for applications in quantum information processing, such as the unconditional verification of nonstabilizer operations and data hiding in magic~\cite{Zu24limitations}.

\section{Stabilizer states and operations}
Let us consider the Pauli group ${\cal P}_n = \langle X, Y, Z \rangle^{\otimes n}$ of an $n$-qubit system, where $X$, $Y$, and $Z$ are the single-qubit Pauli operators. A pure stabilizer state $\ket\psi$ is uniquely characterized by a stabilizer ${\cal S} = \langle g_1, g_2, \dots, g_n \rangle \subset {\cal P}_n$ such that $g \ket{\psi} = \ket{\psi}$ for all $g \in {\cal S}$. We also define the Clifford group as the normalizer of the Pauli group, $\Cl_n = \{ V \in U(2^n) : V {\cal P}_n V^\dagger = {\cal P}_n \}$, where $U(2^n)$ denotes the set of $n$-qubit unitary operations~\cite{gottesman1997stabilizer}.

The stabilizer operations consist of the following elements:
\begin{enumerate}
\item Appending ancillary states in the computational basis.
\item Applying Clifford unitary operations.
\item Performing measurements in the computational basis.
\end{enumerate}
We also note that all these operations can be applied adaptively, depending on the outcomes of previous measurements. Since a quantum circuit composed of stabilizer states and stabilizer operations can be efficiently simulated on classical computers using the Gottesman–Knill theorem~\cite{gottesman1998heisenberg, Aaronson04improved}, stabilizer states and stabilizer operations correspond to free states and free operations, respectively, in the resource theory of magic (see Table~\ref{tab:ent_mag} for a comparison with the resource theory of entanglement).

\begin{table}[t]
\begin{center}
\begin{tabular}{ c | c | c }
\hline \hline
Resource theory & Entanglement  & Magic \\
 \hline \hline
Free states & $\begin{matrix} \text{Separable} \\ \text{states} \end{matrix}$ & $\begin{matrix} \text{Stabilizer} \\ \text{states} \end{matrix}$ \\  
\hline
$\begin{matrix} \text{Free operations} \\ \text{(Operational)} \end{matrix}$ & LOCC & $\begin{matrix} \text{Stabilizer} \\ \text{operations} \end{matrix}$    \\
\hline
$\begin{matrix} \text{Free operations} \\ \text{(Axiomatic)} \end{matrix}$ & $\begin{matrix} \text{Separable} \\ \text{operations} \end{matrix}$ & CSPO$^*$    \\
\hline 
$\begin{matrix} \text{Related free-state} \\ \text{discrimination task}\end{matrix}$ & $\begin{matrix} \text{Nonlocality without} \\ \text{entanglement} \end{matrix}$ & $\begin{matrix} \text{Nonstabilizerness} \\ \text{without magic} \end{matrix}$
\\\hline
\end{tabular}
*CSPO: Completely stabilizer-preserving operations~\cite{seddon2019quantifying, Seddon21quantifying}
\end{center}
\caption{Comparision between the resource theories of entanglement and magic. Related discrimination tasks provide a strict separation between operational and axiomatic free operations in both resource theories.}
\label{tab:ent_mag}
\end{table}

\section{Nonstabilizerness without magic}
\subsection{A set of stabilizer states indistinguishable by stabilizer operations}
In quantum state discrimination, a quantum state $\ket{\psi_\mu}$ randomly chosen from a set of $m$ distinct quantum states $\{ \ket{\psi_\mu} \}_{\mu = 1}^m$ is provided as input, and the task is to correctly identify the label $\mu$ from the outcome of a quantum circuit. In our case of interest, we focus on a set of mutually orthogonal $n$-qubit stabilizer states, which can be generated by applying Clifford unitaries to the computational basis state $\ket{0}^{\otimes n}$ (see Fig.~\ref{fig:discrimination}). When all quantum circuits are allowed, these states can always be perfectly distinguished by performing projective measurements $\Pi_{\psi_\mu} = \ket{\psi_\mu}\bra{\psi_\mu}$.

Our main observation is that when the quantum circuit is restricted to stabilizer operations, some stabilizer states without magic cannot be perfectly discriminated.
\begin{theorem} \label{thm:main} A set of mutually orthogonal stabilizer states
\begin{equation}\label{eq:stab_set}
\begin{aligned}
\{ \ket{\psi_\mu} \}_{\mu=1}^6 = \{ &\ket{+} \ket{1} \ket{0}, \ket{0} \ket{+} \ket{1}, \ket{1} \ket{0}\ket{+}, \\
&\ket{-} \ket{1} \ket{0}, \ket{0} \ket{-} \ket{1}, \ket{1} \ket{0}\ket{-}\}
\end{aligned}
\end{equation}
cannot be perfectly discriminated by stabilizer operations. Here, $\ket{0}$ and $\ket{1}$ are computational basis states stabilized by $Z$ and $-Z$, respectively, and $\ket{\pm} = \frac{1}{\sqrt{2}} (\ket{0} \pm \ket{1})$ are stabilized by $\pm X$.
\end{theorem}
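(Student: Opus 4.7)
The plan is to show that any stabilizer discrimination protocol fails to perfectly distinguish the six states by analyzing its first informative measurement step in a canonical form and arguing that this step must either extract no information about the label $\mu$ or disturb the input ensemble enough that its constituents cease to be mutually orthogonal, at which point perfect discrimination of pure states is impossible regardless of subsequent operations.

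First I would put any stabilizer protocol into canonical form. Absorbing Clifford unitaries into later measurements, the protocol becomes a sequence of Pauli measurements on an ancilla-extended system with ancilla initialized in $\ket{0^{\otimes k}}$. A Pauli $Q = Q_A \otimes Q_B$ measured on such a state yields outcome probability $\tfrac{1}{2} + \tfrac{s}{2}\bra{0^{\otimes k}} Q_B \ket{0^{\otimes k}} \Tr[Q_A \rho]$, which is either uniformly random and independent of the input (whenever $Q_B$ contains an $X$ or $Y$ factor, so that $\bra{0^{\otimes k}} Q_B \ket{0^{\otimes k}} = 0$) or equivalent to measuring $\pm Q_A$ on the three-qubit system. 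Hence up to uninformative branches, the first informative step is a Pauli measurement $P$ on $\H_3$. For such $P$, a short calculation yields the constraint that orthogonality of the outcome-conditioned post-measurement states $\{\ket{\psi_\mu^{(s)}}\}$ is automatic on pairs $(\mu,\nu)$ where $P$ is deterministic on at least one state, but when $P$ is random on both, it requires $\bra{\psi_\mu} P \ket{\psi_\nu} = 0$.

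The combinatorial core is then to rule out every non-identity $P$. The six stabilizer groups pair up as $\{1,4\}$, $\{2,5\}$, $\{3,6\}$, and the unsigned Paulis deterministic on each pair form the abelian groups generated by $\{X_1,Z_2,Z_3\}$, $\{Z_1,X_2,Z_3\}$, and $\{Z_1,Z_2,X_3\}$, respectively; their triple intersection is $\{I\}$, so every non-identity Pauli is random on at least two of the six states. Using the cyclic qubit symmetry $1\!\to\!2\!\to\!3\!\to\!1$ and the $\ket{+}\!\leftrightarrow\!\ket{-}$ symmetry within each pair, the list of Paulis to check collapses to a handful of representatives, and in each case one exhibits a nonzero transition amplitude between some pair of random states. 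Structurally, this is because the six states span the Hamming-weight-$1$-or-$2$ subspace of the three-qubit Hilbert space, which is not invariant under any nontrivial single Pauli, so the $6\!\times\!6$ block $\Pi_6 P \Pi_6$ cannot be diagonal in the $\{\ket{\psi_\mu}\}$ basis with the eigenvalue pattern demanded by orthogonality preservation.

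The main obstacle I anticipate is twofold: (i) even ``uninformative'' Pauli measurements leave the system dephased and possibly entangled with the ancilla, so for fully adaptive protocols one must verify that the cumulative disturbance over many such steps still prevents the surviving branches from being perfectly discriminated by a subsequent informative step; and (ii) rigorously tracking ancilla correlations requires either careful joint-state accounting or a global entropic argument. The latter, hinted at by the introduction's promise of an information--disturbance analysis, would bypass the Pauli case enumeration entirely by upper-bounding the accessible information of the ensemble under any stabilizer POVM strictly below $\log_2 6$, and would additionally scale cleanly to the $n$-qubit generalization developed later in the paper.
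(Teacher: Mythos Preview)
Your overall strategy matches the paper's proof almost exactly: reduce any stabilizer protocol to a sequence of Pauli projectors, isolate the first non-trivial one, and show it necessarily destroys orthogonality among the post-measurement states. Your combinatorial core is equivalent to the paper's, which organizes the case split slightly differently (``$P$ anticommutes with some $Z_k$'' versus ``$P$ commutes with all $Z_k$, hence anticommutes with some $X_k$'') rather than via the cyclic symmetry, but the content is the same.

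The one genuine gap is precisely your obstacle (i), and the paper closes it with a concrete lemma rather than an entropic workaround: if $P$ anticommutes with any ancilla $Z_{n+j}$, then not only is the outcome balanced, but the projection $\Pi_a^P$ is \emph{perfectly reversed} by an explicit Clifford unitary (built from $\tfrac{\mathbb{1}+Z_{n+j}}{2} + P X_{n+j}\tfrac{\mathbb{1}-Z_{n+j}}{2}$ followed by a Hadamard and optional bit-flip on that ancilla). Such a round therefore causes neither information gain nor disturbance and can be absorbed into the next round without loss. The first round that is \emph{not} of this type must have $P$ commuting with every ancilla $Z$, whence $P$ factors as a system Pauli tensor a $Z$-string on the ancilla and acts on $\ket{\psi_\mu}\otimes\ket{0}^{\otimes \ell}$ exactly as a three-qubit Pauli on $\ket{\psi_\mu}$. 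This delivers the reduction you wanted without any accounting of cumulative dephasing or system--ancilla entanglement.

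On your proposed entropic alternative: the paper does derive an entropic bound $H(\mu|\boldsymbol{a})\geq\tfrac{1}{3}$, but it does \emph{not} bypass the Pauli enumeration. The bound comes from upper-bounding the conditional mutual information $I(\mu:\bar{\boldsymbol{a}}|a)$ by the Holevo information of the post-measurement ensemble, which is then evaluated by running over all first-round $P\in\mathcal{P}_3$; the enumeration is still the engine, and the entropic layer only converts the qualitative non-orthogonality into a quantitative gap.
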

We note that these states were previously considered in the context of nonlocality without entanglement~\cite{Bennet99quantum}, while their distinguishability by stabilizer operations has not been studied until now.
\begin{figure}[t]
\includegraphics[width=\linewidth]{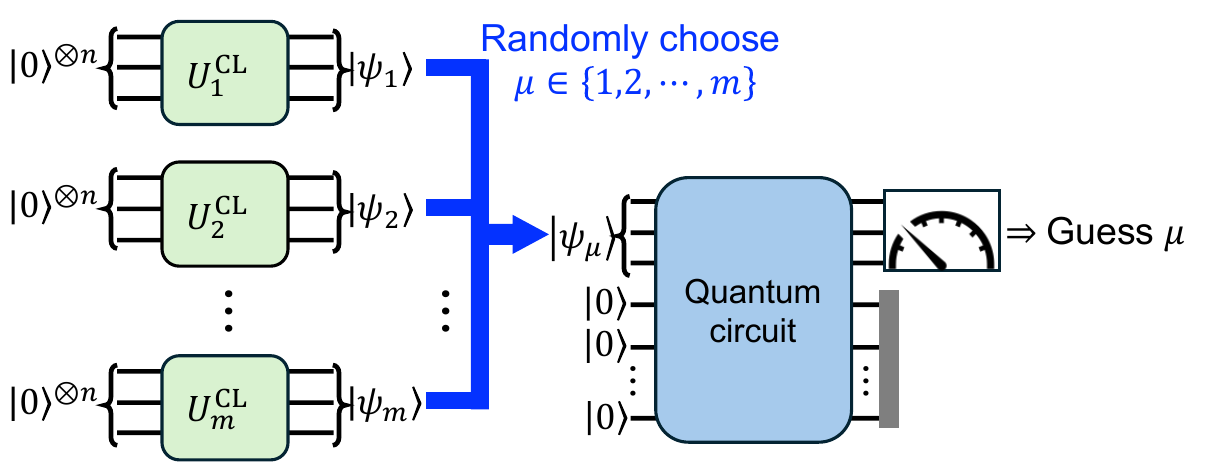}
\caption{Discrimination of a set of mutually orthogonal stabilizer states $\{ \ket{\psi_\mu} \}_{\mu=1}^m$, generated by Clifford operations $U_\mu^{\rm CL}$, can be performed perfectly when all possible quantum circuits are allowed. However, when restricted to stabilizer operations, even with adaptivity, the discrimination probability may become strictly less than $1$.}
\label{fig:discrimination}
\end{figure}

Here, we provide a proof sketch without including ancillary states, while a complete proof that incorporates the scenario with ancillary states can be found in Appendix~\ref{app:thm1}. The idea behind Theorem~\ref{thm:main} is that any informative measurement process realized by stabilizer operations necessarily involves the disturbance of quantum states.

More explicitly, we use the fact that the effect of any Clifford operation acting on an $n$-qubit state followed by a computational basis measurement on a single-qubit with outcome $a \in \{0, 1\}$ can be represented as
\begin{equation}
\Pi_a^P = \frac{\mathbb{1} + (-1)^a P}{2},
\end{equation}
for some Pauli operator $P \in {\cal P}_n$. This can be shown by noting that a Clifford unitary $U \in {\rm Cl}_n$ followed by computational basis measurement on the $j$-th qubit $\Pi_a^{(j)} = \mathbb{1}\otimes \cdots \otimes \ket{a}_j \bra{a} \otimes \cdots \otimes \mathbb{1}= \frac{\mathbb{1} + (-1)^a Z_j}{2}$ with outcome $a \in \{ 0 , 1 \}$ can be expressed as $\Pi_a^{(j)} U \rho U^\dagger \Pi_a^{(j)}= U U^\dagger \Pi_a U \rho U^\dagger \Pi_a U U^\dagger =  U \Pi_a^P \rho \Pi_a^P U^\dagger$ with $P = U^\dagger Z_j U$, where $Z_j$ is the $Z$ operator acting on the $j$-th qubit. Then, the remaining $U$ after the measurement effect can be absorbed into the next step of the stabilizer operations. Also, multi-qubit measurements can be regarded as a sequence of single-qubit measurements. This leads to the fact that any stabilizer operation can be decomposed into a sequence of such effects, i.e.,
\begin{equation}
\Pi_{\boldsymbol{a}} = \Pi_{a_L}^{P_L} \circ \cdots \circ \Pi_{a_2}^{P_2} \circ \Pi_{a_1}^{P_1},
\end{equation}
where $L$ can be arbitrarily large, and each $P_i$ can be chosen adaptively based on the history of previous outcomes $(a_1, a_2, \ldots, a_j)$ with $j < i$. While any stabilizer operations can be, in principle, realized in a finite number of rounds~\cite{heimendahl2022axiomatic}, this does not affect the proof of our statement.

Now let us decompose the discrimination process $\Pi_{\boldsymbol{a}}$ into two parts:  
\begin{enumerate}
\item[i)] the first measurement, $\Pi_{a}^{P} = \Pi_{a_1}^{P_1}$ 
\item[ii)] all remaining measurements, $\tilde\Pi_{\bar{\boldsymbol{a}}} := \Pi_{a_L}^{P_L} \circ \cdots \circ \Pi_{a_2}^{P_2}$.
\end{enumerate}
The post-measurement states after the first measurement then become
\begin{equation}
\ket{\psi_{\mu}^{P,a}} \propto \Pi_a^P \ket{\psi_{\mu}},
\end{equation}
with probability $p_{\mu}^{P}(a) = \bra{\psi_{\mu}} \Pi_a^P \ket{\psi_{\mu}} = \frac{1}{2} \left( 1 + (-1)^a \bra{\psi_{\mu}} P \ket{\psi_{\mu}} \right)$. From the fact that the expectation value of any Pauli operator $P$ for stabilizer states is either $0$ or $1$, the measurement probability can only take three possible values of $p_{\mu}^{P}(a) \in \left\{ 0, \frac{1}{2}, 1 \right\}$.

We then show that, for any non-trivial first-round measurement $\Pi_a^P$ with $P \neq \mathbb{1}$, there exists $\mu \neq \mu'$ satisfying the following two conditions:
\begin{enumerate}
\item $p_{\mu}^{P}(a) = \frac{1}{2} \Leftrightarrow \bra{\psi_{\mu}} P \ket{\psi_{\mu}} = 0,$
\item $\left| \langle \psi_{\mu'}^{P,a} | \psi_{\mu}^{P,a} \rangle \right|^2  > 0 \Leftrightarrow \left| \bra{\psi_{\mu'}} P \ket{\psi_{\mu}} \right|^2 > 0.$
\end{enumerate}
The first condition means that $\ket{\psi_{\mu}}$ cannot be unambiguously distinguished by $\Pi_a^P$, and the second condition implies that the post-measurement states become non-orthogonal. In other words, for any measurement $\Pi_a^P$, there exists a pair of quantum states $\ket{\psi_\mu}$ and $\ket{\psi_{\mu'}}$ that cannot be fully determined. Moreover, the post-measurement states after the first measurement cannot be perfectly distinguished by any (possibly adaptive) sequential operations, even if nonstabilizer operations are allowed, because no quantum measurement can fully distinguish a set of non-orthogonal quantum states.

We also sketch why adding stabilizer-state ancilla does not help by the following lemma (see Appendix~\ref{app:thm1} for details):
\begin{lemma}[Informal]\label{lemma:anc_recover_informal}
Suppose an $n$-qubit stabilizer state is tensored with $\ell$ ancilla qubits prepared in $\ket{0}^{\otimes \ell}$. If $P \in {\cal P}_{n + \ell}$ anticommutes with the Pauli $Z$ on at least one ancilla qubit, then the outcome probability is balanced and the effect of the projective measurement $\Pi^{P}_a$ can be perfectly reversed by a stabilizer operation.
\end{lemma}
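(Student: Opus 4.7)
My plan is to establish the two claims -- balanced outcome probabilities and perfect reversibility -- separately, exploiting the fact that the ancilla register supplies a stabilizer $Z_j$ that anticommutes with the measured Pauli $P$.

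For the balanced outcomes, the approach is to combine $Z_j\ket{\psi, 0^\ell} = \ket{\psi, 0^\ell}$ with $\{P, Z_j\} = 0$ to show that $\bra{\psi, 0^\ell} P \ket{\psi, 0^\ell}$ equals its own negative and therefore vanishes; the outcome formula $p(a) = \tfrac{1}{2}(1 + (-1)^a \langle P\rangle)$ then yields $p(a) = \tfrac{1}{2}$ for each $a$. This part is essentially a one-line computation.

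For the reversal, I would construct an explicit recovery built entirely from stabilizer operations. The anticommutation $\{P, Z_j\} = 0$ forces the qubit-$j$ factor of $P$ to be $X$ or $Y$, so I would decompose $P = A \otimes B_j$ with $B_j \in \{X_j, Y_j\}$ and $A$ a Hermitian Pauli on the other $n+\ell-1$ qubits. Letting $\ket{\Phi}$ denote the $(n+\ell-1)$-qubit state carrying $\ket{\psi}$ and the remaining ancillas, expanding the post-measurement state in the computational basis on qubit $j$ reveals a balanced two-term superposition: the $\ket{0}_j$ branch carries $\ket{\psi, 0^\ell}$ and the $\ket{1}_j$ branch carries $A\ket{\Phi}$ (with an $i$-phase in the $B_j = Y_j$ case). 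The proposed recovery is (i) a computational-basis measurement of qubit $j$, which yields each outcome $b$ with probability $1/2$, followed by (ii) a conditional Pauli correction: do nothing if $b = 0$, apply $P$ itself if $b = 1$. Checking that the $b = 1$ branch is restored correctly reduces to $A^2 = I$ together with the cancellation of the $B_j\ket{0}_j$ phase by the $B_j\ket{1}_j$ phase appearing in the correction step.

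The only technical nuisance I anticipate is the $B_j = Y_j$ case, where $\pm i$ factors appear on the $\ket{0}_j \leftrightarrow \ket{1}_j$ transition; these cancel automatically because $P$ is Hermitian, so the two applications of $B_j$ (one inside the projector, one in the correction) produce conjugate phases that multiply to $1$. The final check -- that every step of the recovery lies within the stabilizer class -- is immediate, since computational-basis measurements and adaptive Pauli corrections are by definition free operations.
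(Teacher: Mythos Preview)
Your argument is correct. The balanced-outcome computation is exactly right, and your measure-and-correct recovery does what it should: writing $P = A\otimes B_j$ with $B_j\in\{X_j,Y_j\}$, one checks that the $b=1$ branch after applying $P$ returns $(-1)^a\,|\psi,0^\ell\rangle$, with the $Y_j$ phases cancelling via $i\cdot(-i)=1$, precisely as you say. The recovery is state-independent and even $a$-independent, and all its ingredients are stabilizer operations.

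The route, however, is genuinely different from the paper's. The paper works in the stabilizer picture: it applies the Gottesman--Knill update rules to the generator list $\langle g_1,\dots,g_n,Z_{n+1},\dots,Z_{n+\ell}\rangle$ after the projection $\Pi^P_a$, and then exhibits an explicit \emph{Clifford unitary}
\[
U \;=\; \frac{\mathbb{1}+Z_{n+k}}{2}\;+\;P\,X_{n+k}\,\frac{\mathbb{1}-Z_{n+k}}{2},
\]
followed by $H_{n+k}$ and an optional $X_{n+k}$, that maps the updated stabilizer back to the original one. So the paper obtains a stronger conclusion (a deterministic Clifford \emph{unitary} inverse $U_a$), at the cost of the generator bookkeeping; your approach is more elementary and state-level, producing a non-unitary stabilizer \emph{channel} (measure $j$, conditionally apply $P$) as the inverse. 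Both suffice for the informal lemma as stated, and in fact your construction is the ``incoherent'' version of the paper's: the paper's $U$ is essentially a controlled-$PX_{n+k}$ with control $Z_{n+k}$, which coherently implements the same correction you perform after a destructive measurement.
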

Thus, the outcome distribution is balanced and the post-measurement state is perfectly recoverable, i.e.,  no information gain and no disturbance. Consequently, any informative measurement $\Pi_a^P$ must commute with the Pauli $Z$ on every ancilla qubit. However, such a measurement acts trivially on the ancilla, so we conclude that additional ancilla qubits do not help in any case.

\subsection{Entropic analysis}
To obtain quantitative bounds on the discrimination task, we evaluate the mutual information $I(\mu:\boldsymbol{a}) = H(\mu) + H(\boldsymbol{a}) - H(\mu\boldsymbol{a})$ between the distributed label $\mu$ and the entire set of outcomes from the quantum circuit, $\boldsymbol{a}$. Here, $H(x) = -\sum_x p(x) \log_2 p(x)$ is the Shannon entropy.  Since we can divide the outcomes $\boldsymbol{a}$ into the first round's outcome $a \in \{0, 1\}$ and the remaining outcomes $\bar{\boldsymbol{a}}$, the mutual information can be written as
\begin{equation}
I(\mu : \boldsymbol{a}) = I(\mu : a \bar{\boldsymbol{a}}) = I(\mu:a) + I(\mu:\bar{\boldsymbol{a}}|a),
\end{equation}
using the chain rule of mutual information in terms of the conditional mutual information $I(\mu:\bar{\boldsymbol{a}}|a)$.   The joint distribution is given by $p(\mu, \boldsymbol{a}) = p_\mu \, p(\boldsymbol{a} | \mu)$, with the conditional probability $p(\boldsymbol{a} | \mu) = \bra{\psi_\mu} \Pi_{\boldsymbol{a}} \ket{\psi_\mu}$, representing the probability of obtaining the outcome $\boldsymbol{a}$ for a given state $\ket{\psi_\mu}$. We also note that the prior distribution of $\mu$ is uniform, $p_\mu = \frac{1}{6}$, as each of the six states is chosen randomly.

We then use the fact that the conditional mutual information is bounded in terms of the entropy of the post-measurement states as
\begin{equation} 
I(\mu:\bar{\boldsymbol{a}}|a) \leq \sum_a p(a) \, \chi\left( \left\{ p(\mu|a), \psi_\mu^{P,a} \right\} \right),
\end{equation}
where $\chi(\{ p_i, \rho_i \}) := S\left(\sum_i p_i \rho_i\right) - \sum_i p_i S(\rho_i)$ is the Holevo information, and $S(\rho) := - {\rm Tr} (\rho \log_2 \rho)$ is the von Neumann entropy. Also, $p(\mu | a) = \frac{p(\mu, a)}{p(a)} = \frac{p_\mu^P(a)}{\sum_\mu p_\mu^P(a)}$ is the distribution of $\mu$ conditioned on $a$. We further note that $S(\psi_\mu^{P,a}) = 0$ for the pure state $\psi_\mu^{P,a} = \ket{\psi_\mu^{P,a}}\bra{\psi_\mu^{P,a}}$, in which case the Holevo information reduces to the entropy of the post-measurement state as $\chi\left( \left\{ p(\mu|a), \psi_\mu^{P,a} \right\} \right) = S\left(\sum_\mu p(\mu|a) \ket{\psi_\mu^{P,a}}\bra{\psi_\mu^{P,a}} \right)$.

We now explicitly evaluate the bound for all possible first-round measurements realized by stabilizer operations. By direct calculation for all possible measurements characterized by $P \in {\cal P}_3$ for $3$-qubit states in Eq.~\eqref{eq:stab_set}, we obtain
\begin{equation}\label{eq:entropy_gap}
\begin{aligned}
I(\mu : \boldsymbol{a}) & \leq I(\mu:a) + \sum_a p(a) S\left(\sum_\mu p(\mu|a) \ket{\psi_\mu^{P,a}}\bra{\psi_\mu^{P,a}}\right) \\
& \leq \log_2 6 - \frac{1}{3}.
\end{aligned}
\end{equation}
We note that perfect discrimination is possible if and only if $I(\mu: \boldsymbol{a}) = H(\mu) = \log_2 6$, which occurs when $\mu$ and $\boldsymbol{a}$ are perfectly correlated. Therefore, the non-vanishing gap of the conditional entropy $H(\mu | \boldsymbol{a}) := H(\mu) - I(\mu : \boldsymbol{a})  =  \frac{1}{3} > 0$ shows that perfect discrimination of the states in Eq.~\eqref{eq:stab_set} is not possible using any stabilizer operations. This gap remains even when the subsequent process allows nonstabilizer operations after the first-round stabilizer measurements.

A direct consequence is the bound on the success probability of the correct guess with stabilizer operations $p^{\rm STAB}_{\rm succ} < 0.9603$ given by Fanno's inequality~\cite{fano1961transmission}, $H(\mu | \boldsymbol{a}) \leq H_b(p^{\rm STAB}_{\rm succ}) + (1- p^{\rm STAB}_{\rm succ})\log_2 ( m - 1)$, where $H_b(x) = -x \log_2(x) - (1-x)\log_2(1-x)$ is the binary entropy and $m=6$ is the number of states to be discriminated. 

\subsection{Explicit non-Clifford circuit for perfect discrimination}
While the set in Eq.~\eqref{eq:stab_set} cannot be perfectly discriminated by stabilizer operations, perfect discrimination is possible by allowing nonstabilizer operations. We provide an explicit non-Clifford circuit consisting of a series of controlled-controlled-Hadamard (CCH) gates (see Fig.~\ref{fig:non-stab_circuit}) that can perfectly discriminate the set in Eq.~\eqref{eq:stab_set}. 
\begin{figure}[h]
\begin{minipage}[t]{.5\linewidth}
\vspace{-28pt}
$ 
\Qcircuit @C=.5em @R=1.2em @!R{
&&{\gategroup{1}{3}{3}{3}{1.em}{\{}} & \ctrl{1} & \qw & \ctrlo{1} & \qw & \gate{H} & \qw & \meter \\
&\lstick{\ket{\psi_\mu}} && \ctrlo{1} & \qw & \gate{H} & \qw & \ctrl{-1} & \qw & \meter \\
&&& \gate{H} & \qw & \ctrl{-1} & \qw & \ctrlo{-1} & \qw & \meter }
$
\end{minipage}
\begin{tabular}{ | c | c  |}
\hline 
Input  & Output \\
\hline 
$\ket{+}\ket{1}\ket{0}$ & $\ket{0}\ket{1}\ket{0}$ \\
$\ket{0}\ket{+}\ket{1}$ & $\ket{0}\ket{0}\ket{1}$ \\
$\ket{1}\ket{0}\ket{+}$ & $\ket{1}\ket{0}\ket{0}$ \\
$\ket{-}\ket{1}\ket{0}$ & $\ket{1}\ket{1}\ket{0}$ \\
$\ket{0}\ket{-}\ket{1}$ & $\ket{0}\ket{1}\ket{1}$ \\
$\ket{1}\ket{0}\ket{-}$ & $\ket{1}\ket{0}\ket{1}$ \\
\hline
\end{tabular}
\caption{Quantum circuit that perfectly distinguishes the states in Eq.~\eqref{eq:stab_set}.}
\label{fig:non-stab_circuit}
\end{figure}
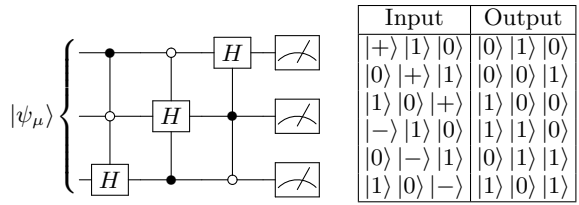
Under this circuit, the target states are transformed into the computational basis, they can be straightforwardly discriminated by measuring each qubit.

This result also implies that, while a single stabilizer state can always be transformed into another stabilizer state via Clifford operations, two sets of stabilizer states may not be interconvertible in this way.

\subsection{$n$-qubit generalization}
We also consider the $n$-qubit generalization of the set of stabilizer states discussed in Theorem~\ref{thm:main}. We first show that $n = 3$ is the minimum number of qubits with such properties by proving the following theorem:
\begin{theorem} \label{thm:2qubit}
Any set of $n$-qubit orthogonal pure stabilizer states with $n = 1,2$ can be perfectly distinguished by stabilizer operations.
\end{theorem}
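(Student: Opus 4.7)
The strategy is to handle $n=1$ by direct inspection and to reduce $n=2$ to $n=1$ via an informative Pauli measurement that preserves orthogonality. Writing $\mathcal{S}_\mu$ for the signed stabilizer group of $|\psi_\mu\rangle$ and $\bar{\mathcal{S}}_\mu := \mathcal{S}_\mu/\{\pm I\}$ for its image modulo signs, the central object is the common unsigned stabilizer $\bigcap_\mu \bar{\mathcal{S}}_\mu$.

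For $n=1$, any mutually orthogonal set of pure stabilizer states has at most two elements. A single state is trivially identified; for a pair $\{|\psi_1\rangle,|\psi_2\rangle\}$ there is a non-identity single-qubit Pauli $P$ with $P|\psi_1\rangle=|\psi_1\rangle$ and $P|\psi_2\rangle=-|\psi_2\rangle$, and a Clifford $U$ with $UPU^\dagger=Z$ rotates the pair to $\{|0\rangle,|1\rangle\}$, which is discriminated by a $Z$-basis measurement.

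For $n=2$, the plan hinges on the following Key Lemma: for any collection of $m\ge 2$ mutually orthogonal pure stabilizer states on two qubits, $\bigcap_\mu \bar{\mathcal{S}}_\mu$ contains a non-identity Pauli $P$. Given such $P$, I apply a Clifford $U$ with $UPU^\dagger=Z_1\otimes I$ and measure the first qubit in the computational basis. The outcome splits the set according to each state's $P$-eigenvalue, and since each eigenspace of a non-identity two-qubit Pauli has dimension two, each branch contains at most two orthogonal post-measurement states of the form $|a\rangle\otimes|\phi_\mu\rangle$ with $|\phi_\mu\rangle$ a single-qubit stabilizer state. The $n=1$ procedure on the second qubit then completes the discrimination.

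The Key Lemma is the main obstacle and is proved by a finite case analysis on $m\in\{2,3,4\}$. Using $|\psi\rangle\langle\psi|=\tfrac{1}{4}\sum_{g\in\mathcal{S}}g$ together with $\mathrm{Tr}(g)=0$ for $g\ne I$, orthogonality of any pair forces $|\bar{\mathcal{S}}_\mu\cap\bar{\mathcal{S}}_\nu|\ge 2$, which settles $m=2$. For $m=3$: if any two of the three groups coincide, any non-identity element in the third's intersection with either supplies the common Pauli; otherwise each pairwise intersection is $\{I,P_{ij}\}$ with non-identity $P_{ij}$, and these $P_{ij}$ commute pairwise by abelianness of the $\bar{\mathcal{S}}_\mu$ containing them. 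If they were all distinct, the abelian subgroup they generate would have size $2^3=8$, exceeding the maximal abelian subgroup size $2^n=4$ of $\mathcal{P}_2/\{\pm I\}$; the only remaining alternative is a relation $P_{23}=P_{12}P_{13}$, but then $P_{23}\in\bar{\mathcal{S}}_1\cap\bar{\mathcal{S}}_2=\{I,P_{12}\}$ contradicts $P_{23}\ne P_{12}$. Hence at least two $P_{ij}$ coincide, and that common value lies in all three $\bar{\mathcal{S}}_\mu$. For $m=4$: if any two of the four stabilizer groups coincide, apply the $m=3$ case to three distinct states among the four; otherwise, iterating the $m=3$ argument on each triple forces a single $P^\star$ to serve as the pairwise $P_{\mu\nu}$ for every pair, so $P^\star\in\bar{\mathcal{S}}_\mu$ for all $\mu$. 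The remaining steps—Clifford rotation, measurement, and recursion—are routine once the Key Lemma is in hand, and the entire argument fails starting at $n=3$, consistently with Theorem~\ref{thm:main}.
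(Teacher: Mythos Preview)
Your argument is correct, and it reaches the same structural conclusion as the paper---that all the unsigned stabilizer groups share a non-identity Pauli $P$---but by a different method. The paper's Lemma~2 proves directly, by a case analysis on which generators carry minus signs, that any four mutually orthogonal two-qubit stabilizer states have stabilizers of the form $\langle g_1,g_2\rangle$, $\langle g_1,-g_2\rangle$, $\langle -g_1,g_3\rangle$, $\langle -g_1,-g_3\rangle$; it then exhibits a Clifford mapping to one of two canonical quartets and spells out the adaptive measurement. Your Key Lemma instead extracts the common $P$ via the overlap identity $|\langle\psi_\mu|\psi_\nu\rangle|^2=\tfrac{1}{4}|\mathcal{S}_\mu\cap\mathcal{S}_\nu|$ and a counting obstruction on abelian subgroups of $\mathcal{P}_2$ modulo phases, then gives a uniform measure-and-recurse procedure rather than canonical forms. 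Your route is more conceptual and handles $m=2,3,4$ in one framework without enumerating sign patterns; the paper's route is more concrete and yields the explicit normal forms $\{\ket{00},\ket{01},\ket{10},\ket{11}\}$ or $\{\ket{00},\ket{01},\ket{1}\ket{+},\ket{1}\ket{-}\}$, which makes the adaptive protocol fully explicit. Both are valid proofs of the same fact; the only minor imprecision in yours is the phrase ``$\mathcal{P}_2/\{\pm I\}$'' where you really mean the Pauli group modulo its full center, but the bound $2^n=4$ on maximal abelian subgroups is what you use and it holds.
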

This is trivial for the single-qubit case, as there are only two orthogonal states, for example, $\{ \ket{0}, \ket{1} \}$, and Pauli measurements can perfectly discriminate them. The case with $n=2$ can be proved by showing that any four orthogonal two-qubit pure stabilizer states can be transformed by Clifford unitaries into either  $\{ \ket{0}\ket{0}, \ket{0}\ket{1}, \ket{1}\ket{0}, \ket{1}\ket{1} \}$ or $\{ \ket{0}\ket{0}, \ket{0}\ket{1}, \ket{1}\ket{+}, \ket{1}\ket{-} \}$ (see Appendix~\ref{app:thm2}). The first case can be trivially distinguished by performing computational basis measurements on each qubit. For the second case, one may first measure the first qubit in the computational basis, then apply a Hadamard gate to the second qubit if the outcome is $\ket{1}$. Performing a computational basis measurement on the second qubit then perfectly discriminates the four orthogonal stabilizer states (see also Ref.~\cite{heimendahl2022axiomatic}).  

Since four is the maximum number of two-qubit orthogonal states, and any set with fewer states can also be fully discriminated using the same procedure, one can perfectly discriminate any set of two-qubit orthogonal stabilizer states using only stabilizer operations.

We also note that if any single state is removed from Eq.~\eqref{eq:stab_set}, the remaining states can be perfectly discriminated by a stabilizer operation. Without loss of generality, suppose that the last state, $\ket{1}\ket{0}\ket{-}$, is removed from Eq.~\eqref{eq:stab_set}. We first measure the last qubit in the computational basis. For the outcome $0$, the possible states are $\{ \ket{+}\ket{1}\ket{0}, \ket{-}\ket{1}\ket{0}, \ket{1}\ket{0}\ket{+} \}$, and for the outcome $1$, the possible states are $\{ \ket{0}\ket{+}\ket{1}, \ket{0}\ket{-}\ket{1}, \ket{1}\ket{0}\ket{+} \}$, with $\ket{1}\ket{0}\ket{+}$ in common. By adaptively performing the second measurement on the first qubit if the outcome is $0$, and on the second qubit if the outcome is $1$, one can discriminate the state $\ket{1}\ket{0}\ket{+}$ from the other two states in both cases. Finally, the remaining two states are orthogonal in the $\ket{\pm}$ basis and can be distinguished by a measurement in the $x$-basis, which is a stabilizer operation.

This leads to an interesting open question about the minimum number of orthogonal $n$-qubit stabilizer states that cannot be perfectly distinguished by stabilizer operations. We conjecture that, for $n=3$, this minimum is six, which would make the construction in Eq.~\eqref{eq:stab_set} optimal.

Meanwhile, when considering a set of orthogonal \textit{mixed} stabilizer states, one can find the following two-qubit example, even with only two states, that cannot be perfectly discriminated by stabilizer operations:
\begin{equation}\label{eq:2qubit_mixed}
\begin{aligned}
\rho_0 &= \frac{1}{2} \left( \ket{0}\bra{0} \otimes \ket{+}\bra{+} + \ket{+}\bra{+} \otimes \ket{0}\bra{0} \right),\\
\rho_1 &= \frac{1}{2} \left( \ket{1}\bra{1} \otimes \ket{1}\bra{1} + \ket{-}\bra{-} \otimes \ket{-}\bra{-} \right).
\end{aligned}
\end{equation}
This can be shown similarly to Theorem~\ref{thm:main} by checking that, for any $P \neq \mathbb{1}$, the first round of the measurement implemented by the projection operator $\Pi^P_a$ cannot fully distinguish $\rho_0$ and $\rho_1$. At the same time, the post-measurement states $\rho_\mu^a = \frac{\Pi^P_a \rho_\mu \Pi^P_a}{{\rm Tr}[\rho_\mu \Pi^P_a]}$ are no longer orthogonal for both $a=0$ and $a=1$, which means that any subsequent measurement cannot distinguish between $\mu = 0$ and $\mu = 1$. Also, the conditional entropy bound after the first round of measurement is evaluated as $H(\mu |\boldsymbol{a}) \geq 0.3159$ from Eq.~\eqref{eq:entropy_gap}, which bounds the success probability as $p^{\rm STAB}_{\rm succ} < 0.943$.

For a larger number of qubits, we show that the set considered in Theorem~\ref{thm:main} can be generalized to $3n$-qubit systems as follows:

\begin{theorem} \label{thm:3nqubit}
A set of $3n$-qubit orthogonal stabilizer states 
\begin{equation}\label{eq:3n_qubit_set}
\{ \ket{\boldsymbol\alpha}_x \ket{\boldsymbol\beta_1}_z \ket{\boldsymbol\beta_0}_z,\,
\ket{\boldsymbol\beta_0}_z \ket{\boldsymbol\alpha}_x \ket{\boldsymbol\beta_1}_z,\,
\ket{\boldsymbol\beta_1}_z \ket{\boldsymbol\beta_0}_z \ket{\boldsymbol\alpha}_x \}_{\boldsymbol{\alpha}\boldsymbol{\beta}_0\boldsymbol{\beta}_1},
\end{equation}
cannot be perfectly distinguished by any stabilizer operations. Here, $\ket{\boldsymbol{\alpha}}_x = \left( \bigotimes_{i=1}^n Z^{\alpha_i}_i \right) \ket{+}^{\otimes n}$ with $\boldsymbol{\alpha} = \alpha_1\alpha_2 \cdots \alpha_n \in \{ 0, 1\}^n$  are $n$-qubit states in the $x$-basis, and $\ket{\boldsymbol{\beta}}_z = \bigotimes_{i=1}^n \ket{\beta_i}$ with $\boldsymbol{\beta} = \beta_1\beta_2 \cdots \beta_n  \in \{ 0, 1\}^n$ are $n$-qubit states in the $z$-basis. Also, $\boldsymbol{\beta}_{0}$ and $\boldsymbol{\beta}_{1}$ are elements of all possible $n$-bit strings that satisfy $f(\boldsymbol{\beta}_0) = 0$ and $f(\boldsymbol{\beta}_1) = 1$, respectively, for a Boolean function $f: \{0,1\}^n \to \{0,1\}$ with vanishing linear structure.
\end{theorem}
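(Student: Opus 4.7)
The plan is to mirror the proof of Theorem~\ref{thm:main}: any stabilizer operation reduces to a sequence of effective Pauli projectors $\Pi_a^P$, so perfect discrimination of the set in Eq.~\eqref{eq:3n_qubit_set} fails as soon as the very first non-trivial such projector forces (a) some state $\mu$ to have balanced outcome probability $p_\mu^P(a) = 1/2$ and (b) some pair of post-measurement states $\ket{\psi_\mu^{P,a}}, \ket{\psi_{\mu'}^{P,a}}$ to become non-orthogonal. Equivalently, for every non-trivial $P \in {\cal P}_{3n}$ I must exhibit $\mu \neq \mu'$ in~\eqref{eq:3n_qubit_set} with $\bra{\psi_\mu} P \ket{\psi_\mu} = 0$ and $\bra{\psi_{\mu'}} P \ket{\psi_\mu} \neq 0$. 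The ancilla-extended scenario is then handled by the same informal Lemma~\ref{lemma:anc_recover_informal}, which reduces every informative first-round measurement to a non-trivial Pauli acting purely on the $3n$ system qubits.

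First I would decompose $P = \gamma\, P_A \otimes P_B \otimes P_C$ block-wise, with each $P_X = X^{\boldsymbol x_X} Z^{\boldsymbol z_X}$ an $n$-qubit Pauli on one of the three blocks. Using the elementary identities $X^{\boldsymbol x} Z^{\boldsymbol z}\ket{\boldsymbol\beta}_z = (-1)^{\boldsymbol z\cdot\boldsymbol\beta}\ket{\boldsymbol\beta \oplus \boldsymbol x}_z$ and $X^{\boldsymbol x} Z^{\boldsymbol z}\ket{\boldsymbol\alpha}_x = (-1)^{\boldsymbol x\cdot(\boldsymbol\alpha \oplus \boldsymbol z)} \ket{\boldsymbol\alpha \oplus \boldsymbol z}_x$, a direct calculation gives that the diagonal matrix element of $P$ on a Type~A state $\ket{\boldsymbol\alpha}_x\ket{\boldsymbol\beta_1}_z\ket{\boldsymbol\beta_0}_z$ is non-zero only when $\boldsymbol z_A = \boldsymbol x_B = \boldsymbol x_C = \boldsymbol 0$, and cyclic permutation of the block positions yields analogous conditions for Types B and C. The simultaneous intersection of the three would force $P = \pm\mathbb{1}$, so any non-trivial $P$ produces at least one type supplying a state $\mu$ with vanishing expectation. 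This secures condition (a).

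Next I would work within this distinguished type, say Type~A by cyclic symmetry, and compute the off-diagonal matrix element $\bra{\psi_{\mu'}} P \ket{\psi_\mu}$ for $\mu' \neq \mu$ another Type~A state. It reduces to a product of Kronecker deltas enforcing $\boldsymbol\alpha' = \boldsymbol\alpha \oplus \boldsymbol z_A$, $\boldsymbol\beta_1' = \boldsymbol\beta_1 \oplus \boldsymbol x_B$, $\boldsymbol\beta_0' = \boldsymbol\beta_0 \oplus \boldsymbol x_C$, together with the requirement that $\boldsymbol\beta_1' \in f^{-1}(1)$ and $\boldsymbol\beta_0' \in f^{-1}(0)$. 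When $\boldsymbol z_A \neq \boldsymbol 0$ and $\boldsymbol x_B = \boldsymbol x_C = \boldsymbol 0$, the choice $\boldsymbol\alpha' = \boldsymbol\alpha \oplus \boldsymbol z_A$ already gives $\mu' \neq \mu$ with non-zero matrix element. Otherwise, the preimage requirement for the $B$-block (resp.\ $C$-block) amounts to the existence of some $\boldsymbol\beta_1 \in f^{-1}(1)$ with $f(\boldsymbol\beta_1 \oplus \boldsymbol x_B) = 1$ (resp.\ $\boldsymbol\beta_0 \in f^{-1}(0)$ with $f(\boldsymbol\beta_0 \oplus \boldsymbol x_C) = 0$), which is exactly the statement that $\boldsymbol x_B$ (resp.\ $\boldsymbol x_C$) is \emph{not} a linear structure of $f$. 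By the vanishing linear structure hypothesis on $f$, such preimages exist for every non-zero shift, delivering the desired $\mu'$ and securing condition (b).

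With (a) and (b) both in hand, the argument of Theorem~\ref{thm:main} transfers verbatim: the post-measurement states after the first round cannot be subsequently separated even with arbitrary non-stabilizer resources, and ancillas are neutralized by Lemma~\ref{lemma:anc_recover_informal}. The main obstacle is the third step: each sub-case of $P$'s block pattern must be translated into a preimage problem for $f$, and one must verify that the vanishing linear structure hypothesis delivers a suitable $\mu'$ uniformly across all sub-cases, in particular keeping the matrix element non-zero simultaneously on every one of the three block factors, not just on the single block that witnesses non-linearity. This is where the $n = 1$ proof, with $f(x) = x$, is degenerate and so offers little direct guidance.
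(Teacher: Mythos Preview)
Your overall architecture is sound and matches the paper: reduce to the first non-trivial Pauli projector $\Pi_a^P$, exhibit $(\mu,\mu')$ satisfying conditions (a) and (b), and invoke Lemma~\ref{lemma:anc_recover_informal} for ancillas. Condition~(a) is handled correctly.

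The gap is in condition~(b). You commit to finding $\mu'$ of the \emph{same} type as $\mu$, which for Type~A forces $\boldsymbol\beta_1\oplus\boldsymbol x_B\in f^{-1}(1)$ and $\boldsymbol\beta_0\oplus\boldsymbol x_C\in f^{-1}(0)$, and you claim that the solvability of these constraints ``is exactly the statement that $\boldsymbol x_B$ (resp.\ $\boldsymbol x_C$) is not a linear structure of $f$.'' That equivalence is false. Vanishing linear structure says only that for every non-zero $\boldsymbol\gamma$ the derivative $f(\boldsymbol\beta)\oplus f(\boldsymbol\beta\oplus\boldsymbol\gamma)$ is not identically zero, i.e.\ the shift \emph{flips} $f$ somewhere; it does not guarantee that the shift \emph{preserves} $f$ on $f^{-1}(1)$. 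Concretely, take $n=2$ and $f$ the indicator of the single point $10$: this $f$ has vanishing linear structure, yet for every non-zero $\boldsymbol x_B$ the unique $\boldsymbol\beta_1=10$ is sent into $f^{-1}(0)$. For the Pauli $P$ with $\boldsymbol x_A,\boldsymbol x_B,\boldsymbol x_C$ all non-zero and all $\boldsymbol z$-parts zero, the same-type pairing then fails in each of the three types simultaneously, so the proposal as stated cannot close.

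The paper resolves this by pairing \emph{across} types whenever $P$ anticommutes with some $Z_k$. If the anticommuting $Z$'s in, say, block~1 have pattern $\boldsymbol\gamma\neq\boldsymbol 0$, vanishing linear structure supplies $\boldsymbol\beta_0$ with $f(\boldsymbol\beta_0)=0$ and $f(\boldsymbol\beta_0\oplus\boldsymbol\gamma)=1$; one then takes $\ket{\psi_\mu}=\ket{\boldsymbol\beta_0}_z\ket{\boldsymbol\alpha}_x\ket{\boldsymbol\beta_1}_z$ of Type~B and $\ket{\psi_{\mu'}}$ of Type~C whose first block is $\ket{\boldsymbol\beta_0\oplus\boldsymbol\gamma}_z$. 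Block~1 contributes a perfect $z$--$z$ match, and blocks~2 and~3 each contribute an $x$--$z$ overlap of magnitude $2^{-n/2}$, giving $|\bra{\psi_{\mu'}}P\ket{\psi_\mu}|=2^{-n}\neq 0$. The cross-type move converts the needed hypothesis into ``$f$ is flipped somewhere by the shift,'' which is precisely what vanishing linear structure delivers. Only the remaining case, where $P$ commutes with every $Z_k$ and hence anticommutes with some $X_k$, is handled by a same-type pair, and there your $\boldsymbol\alpha\mapsto\boldsymbol\alpha\oplus\boldsymbol z_A$ argument does go through and coincides with the paper's case~(ii).
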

The key idea is to utilize the Boolean function $f$ with vanishing linear structure~\cite{carlet2021boolean}, which satisfies the property that for any $\boldsymbol{\gamma} \in \{ 0,1 \}^n$, there always exists $\boldsymbol{\beta} \in \{ 0, 1 \}^n$ such that $f(\boldsymbol{\beta})$ and $f(\boldsymbol{\beta} + \boldsymbol{\gamma})$ yield different values. A typical example of such a function is a bent function~\cite{ROTHAUS1976300}. The complete proof can be found in Appendix~\ref{app:thm3}.

\section{Implications}
While our main observation on \textit{nonstabilizerness without magic} reveals a fundamental asymmetry between the cost of preparing stabilizer states and that of discriminating them, it also has several interesting implications in quantum resource theory and quantum foundations. A direct consequence is a strict separation between the operational approach of stabilizer operations and the axiomatic approach of completely-stabilizer-preserving operations (CSPO)~\cite{seddon2019quantifying, Seddon21quantifying} in the quantum resource theory of magic, as follows:
\begin{corollary} 
A quantum channel 
\begin{equation}
{\cal E}(\rho) = \sum_{\mu=1}^6 \Pi_{\psi_\mu} \rho \Pi_{\psi_\mu} + \left(\mathbb{1} - \sum_{\mu=1}^6 \Pi_{\psi_\mu} \right) \rho \left(\mathbb{1} - \sum_{\mu=1}^6 \Pi_{\psi_\mu} \right),
\end{equation}
with the projection operators $\Pi_{\psi_\mu} = \ket{\psi_\mu} \bra{\psi_\mu}$ onto the states in Eq.~\eqref{eq:stab_set}, is a CSPO that is not a stabilizer operation.
\end{corollary}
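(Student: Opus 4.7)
The plan is to prove the two claims separately: that $\E$ is CSPO and that $\E$ is not a stabilizer operation.

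For the CSPO part, I would read off from the definition the Kraus decomposition $\{\Pi_{\psi_\mu}\}_{\mu=1}^{6}\cup\{\Pi_V\}$ with $\Pi_V := \mathbb{1}-\sum_{\mu}\Pi_{\psi_\mu}$, and verify that each Kraus operator is a projector onto a stabilizer subspace. The first six are rank-one projectors onto pure stabilizer states; the seventh projects onto the two-dimensional GHZ code $V = {\rm span}\{\ket{000},\ket{111}\}$, which is the $+1$ codespace of the stabilizer group $\langle Z_1Z_2, Z_2Z_3\rangle$. Projecting any pure stabilizer state (possibly extended by the identity on an arbitrary reference system $R$) onto a stabilizer subspace yields a subnormalized pure stabilizer state, by the standard theory of stabilizer codes. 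Summing over the seven branches, $(\E\otimes\mathcal{I}_R)(\rho)$ is therefore a stabilizer mixture whenever $\rho$ is, which certifies that $\E$ is CSPO.

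For the non-stabilizer part, I would argue by contradiction, combining Theorem~\ref{thm:main} with the monotonicity of fidelity under quantum channels. Suppose $\E$ were realized by a stabilizer operation. Because $\E(\ket{\psi_\mu}\bra{\psi_\mu}) = \ket{\psi_\mu}\bra{\psi_\mu}$ is pure for every $\mu\in\{1,\ldots,6\}$, the conditional output on every measurement branch of the realization must equal $\ket{\psi_\mu}\bra{\psi_\mu}$. Focus on the first nontrivial Pauli effect $\Pi_a^P$ of the realization, after absorbing reversible rounds into the Clifford layer using Lemma~\ref{lemma:anc_recover_informal}. The analysis in the proof of Theorem~\ref{thm:main} then supplies a pair $\mu\neq\mu'$ and an outcome $a$ with $p_\mu^P(a), p_{\mu'}^P(a) > 0$ but $|\bracket{\psi_{\mu'}^{P,a}}{\psi_\mu^{P,a}}| > 0$. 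The conditional sub-channel after outcome $a$ would then have to map the non-orthogonal pair $\{\ket{\psi_\mu^{P,a}},\ket{\psi_{\mu'}^{P,a}}\}$ to the orthogonal pair $\{\ket{\psi_\mu},\ket{\psi_{\mu'}}\}$, which is impossible by the data-processing inequality for fidelity. This rules out any stabilizer realization of $\E$.

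The main obstacle I anticipate is the bookkeeping for arbitrary adaptive, multi-round stabilizer circuits that use ancillary qubits. I would dispose of it using the same reduction as in Theorem~\ref{thm:main}: any informative Pauli measurement can be taken to commute with the ancilla $Z$ operators (otherwise Lemma~\ref{lemma:anc_recover_informal} makes it reversible and balanced, hence absorbable into the Clifford layer), so the first informative round is effectively a measurement on the three system qubits, and the fidelity obstruction above applies regardless of the circuit depth.
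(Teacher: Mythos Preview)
Your proposal is correct and follows essentially the same approach as the paper. For the CSPO part you identify the seventh Kraus operator as the projector onto ${\rm span}\{\ket{000},\ket{111}\}$, which is precisely what the paper records as $\mathbb{1}-\sum_\mu\Pi_{\psi_\mu}=\ket{000}\bra{000}+\ket{111}\bra{111}$, and both you and the paper invoke the fact that projecting onto a stabilizer code space (tensored with identity on a reference) preserves stabilizerness. For the ``not a stabilizer operation'' part, the paper leaves the implication from Theorem~\ref{thm:main} implicit; your argument via purity of the outputs on each branch together with monotonicity of the fidelity is a clean and rigorous way to spell out exactly that implication, and it correctly reuses the first-round analysis (and Lemma~\ref{lemma:anc_recover_informal} for ancillas) from Theorem~\ref{thm:main}.
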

It is straightforward to check that $\mathbb{1} - \sum_\mu \Pi_{\psi_\mu} = \ket{000}\bra{000} + \ket{111}\bra{111}$ and to show that ${\cal E}$ is a CSPO, as projection onto stabilizer states—even when acting on one part of a larger system—preserves a set of stabilizer states. While stabilizer operations were already shown to be a strict subset of CSPO in Ref.~\cite{heimendahl2022axiomatic}, our result provides a concrete example with a clear operational meaning in quantum state discrimination, analogous to the case of nonlocality without entanglement separating LOCC and separable operations.

Another straightforward but intriguing corollary of \textit{nonstabilizerness without magic} concerns the no-cloning theorem:
\begin{corollary}
Stabilizer states cannot be perfectly cloned via stabilizer operations.
\end{corollary}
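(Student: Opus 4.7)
The plan is to derive this corollary from the standard no-cloning theorem, using only the observation that distinct stabilizer states can be non-orthogonal. A convenient witness pair is $\ket{0}$ and $\ket{+}$, both single-qubit stabilizer states with overlap $\bracket{0}{+} = 1/\sqrt{2}$.

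Concretely, I would assume for contradiction that some stabilizer operation $\mathcal{C}$ satisfies $\mathcal{C}(\ket{\psi}\bra{\psi}) = (\ket{\psi}\bra{\psi})^{\otimes 2}$ for every stabilizer state $\ket{\psi}$. Stinespring-dilating $\mathcal{C}$ to an isometry $V$ acting on the input together with a fixed ancilla $\ket{\alpha}$, the purity of each target output forces
\begin{equation}
V\ket{\psi}\ket{\alpha} = \ket{\psi}\ket{\psi}\ket{\beta_\psi}
\end{equation}
on the relevant inputs, for some normalized environment vector $\ket{\beta_\psi}$. Evaluating $(\bra{0}\bra{\alpha})V^\dagger V(\ket{+}\ket{\alpha})$ in two ways then yields $\bracket{0}{+} = \bracket{0}{+}^2 \bracket{\beta_0}{\beta_+}$, which forces $\abs{\bracket{\beta_0}{\beta_+}} = \sqrt{2}$, an impossibility. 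Hence no such $\mathcal{C}$ exists, and a fortiori none exists within the stabilizer class.

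The main obstacle is essentially nil, as this is the textbook Stinespring version of the no-cloning theorem restricted to the stabilizer subset; the only ingredient specific to the present setting is the remark that non-orthogonal pairs of stabilizer states exist. Thematically, the corollary sits naturally alongside Theorem~\ref{thm:main}: a universal stabilizer cloner would let one prepare arbitrarily many copies of each $\ket{\psi_\mu}$ in Eq.~\eqref{eq:stab_set} and extract information by measuring auxiliary copies while leaving the original intact, in direct conflict with the information-disturbance trade-off underlying our main result.
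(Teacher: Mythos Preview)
Your argument is correct but proves a weaker statement than the paper intends, and by a route that does not use any of the paper's new content. By invoking the standard no-cloning theorem on the non-orthogonal pair $\ket{0},\ket{+}$, you only show that no operation whatsoever---stabilizer or not---can clone \emph{all} stabilizer states. That is true, but it is the ordinary no-cloning theorem and has nothing to do with the restriction to stabilizer operations.

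The paper's point, made explicit in the sentence immediately following the corollary, is stronger: even a \emph{finite set of mutually orthogonal} stabilizer states (namely the six states of Eq.~\eqref{eq:stab_set}) cannot be cloned by stabilizer operations. Orthogonal states \emph{can} be perfectly cloned by unrestricted operations (measure, identify, re-prepare), so this is where the restriction to stabilizer operations actually bites. The paper's proof is precisely the argument you relegate to a thematic aside: a stabilizer cloner for the set in Eq.~\eqref{eq:stab_set} would, composed with stabilizer measurements on the extra copies, yield a stabilizer operation that perfectly discriminates that set, contradicting Theorem~\ref{thm:main}. That is the whole proof, and it is the one you should promote from afterthought to main argument; your Stinespring computation with $\ket{0}$ and $\ket{+}$ can be dropped entirely, since it establishes nothing beyond the conventional no-cloning theorem.
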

This follows straightforwardly from the fact that cloning an arbitrary stabilizer state would allow perfect discrimination (or arbitrarily close to it) using multiple copies of the target state. In contrast to the conventional \textit{no-cloning theorem}~\cite{Wooters82single}, our results imply that even a finite set of mutually orthogonal quantum states cannot be perfectly cloned when the allowed operations are restricted.

\section{Applications}
\subsection{Unconditional verification of a quantum circuit's nonstabilizerness}
Based on our results, we construct a verification protocol for quantum computing, in which a verifier tests whether the prover can implement quantum circuits beyond stabilizer operations (i.e., Clifford gates and computational basis measurements), so that they cannot be efficiently simulated by classical computers using the Gottesman–Knill theorem~\cite{gottesman1998heisenberg, Aaronson04improved}. To this end, the verifier asks the prover to correctly identify the label of a randomly chosen stabilizer state from Eq.~\eqref{eq:stab_set} see also Fig.~\ref{fig:discrimination}). Any prover restricted to stabilizer operations is bounded away from perfect success probability.

More precisely, if the prover succeeds in $N_{\rm succ}$ out of $N$ i.i.d. discrimination trials, the empirical success rate is $\hat p = N_{\rm succ}/N$. If $\hat p > p_{\rm succ}^{\rm STAB}$, then the prover is certified to perform operations beyond stabilizers with a confidence level determined by the binomial test. For example, $N=1000$ and $N_{\rm succ} = 980$, i.e., a success probability $\approx 2\%$ higher than the bound $p_{\rm succ}^{\rm STAB} < 0.9603$, already certifies operations beyond stabilizers with one-sided confidence of at least $99.96\%$ ($\approx 3.35\sigma$). This can be further improved if the upper bound on $p_{\rm succ}^{\rm STAB}$ is tightened or using other sets of stabilizer states with lower $p_{\rm succ}^{\rm STAB}$ (e.g., Eq.~\eqref{eq:2qubit_mixed}).

It is worth noting that, although this verification protocol can only witness the prover’s ability beyond stabilizer operations, it can be performed without relying on any computational or cryptographic assumptions, unlike previous approaches to verifying quantum computation~\cite{gheorghiu2019verification}. Furthermore, since non-Clifford gates are among the most costly resources in fault-tolerant quantum computing~\cite{Howard17application}, another advantage of this scheme is that the verifier is able to prepare stabilizer states and to perform quantum error correction only using stabilizer operations without requiring non-Clifford gates.

\subsection{Data hiding in magic}
In conventional quantum data hiding inspired by nonlocality without entanglement, classical data encoded in separable states can only be revealed by entangled measurements, while remaining secure against LOCC~\cite{divincenzo2002quantum}. We can extend this scenario to classical data encoded in stabilizer states that are error-correctable by a stabilizer code, which are secure against any stabilizer operations, whereas a sequence of fault-tolerant non-Clifford gates is required to reveal the encoded data correctly.
 
To illustrate a prototypical example, we use the states in Eq.~\eqref{eq:stab_set}, which can alternatively be expressed as
\begin{equation}
\ket{\psi_\mu} = \ket{\psi_{(k,s)}} = (Z_k)^s H_k X_{k \oplus 1} \ket{000},
\end{equation}
where $\mu = (k, s)$ with $k \in \{ 1,2,3 \}$ and $s \in \{0,1\}$, and $H_k$ is the Hadamard gate acting on the $k$-th qubit. We then define a mixture of $\ket{\psi_{(k,s)}}$ as
\begin{equation}
\rho_s = \frac{1}{3} \sum_{k=1}^3 \ket{\psi_{k,s}} \bra{\psi_{k,s}},
\end{equation}
and encode a classical bit $s$ into the two states $\rho_0$ and $\rho_1$. Since $\rho_0$ and $\rho_1$ cannot be fully distinguished by stabilizer operations, this achieves (partial) data hiding against quantum circuits restricted to Clifford operations and computational basis measurements. The data hiding ratio is then defined as~\cite{matthews2009distinguishability}
\begin{equation}
R:= \frac{p_{\rm succ}^{\rm All}-\frac{1}{2}}{p^{\rm STAB}_{\rm succ}-\frac{1}{2}} ,
\end{equation}
where $p_{\rm succ}^{\rm All}$ is the success probability for correct guessing when any quantum circuits are allowed. For this case, the bound on the conditional entropy $H(\mu | \boldsymbol{a}) \geq 0.207519$ yields $p^{\rm STAB}_{\rm succ} < 0.9674$ with the data hiding ratio $R \geq 1.06987$ as $p_{\rm succ}^{\rm All} =1$.

Alternatively, the mixed states in Eq.~\eqref{eq:2qubit_mixed} can be utilized by preparing an ensemble of non-orthogonal states $\{ \ket{0}\ket{+}, \ket{+}\ket{-} \}$ to encode data $s=0$, and $\{ \ket{1}\ket{1}, \ket{-}\ket{-} \}$ to encode $s=1$. This leads to a lower bound on the data hiding ratio $R \geq 1.129$. One can also consider more complicated states with a larger number of qubits, for example, the states in Eq.~\eqref{eq:3n_qubit_set}. In this case, a quantum circuit designed to discriminate such states for larger $n$ would require evaluating more complex Boolean functions, whose non-Clifford gate count and depth~\cite{amy2013meet, amy2014polynomial, gidney2018halving} could be very costly.

Compared to a recent study on discriminating magic states and their orthogonal complements in odd-dimensional quantum systems~\cite{Zu24limitations}, our approach utilizes a set of stabilizer states without magic.

\section{Remarks}
We have identified a phenomenon \textit{nonstabilizerness without magic} in the resource theory of magic, which demonstrates a fundamental limitation of stabilizer operations in discriminating a set of stabilizer states. As stabilizer states and operations are classically efficiently simulatable~\cite{gottesman1998heisenberg}, our finding reveals a sharp asymmetry between the classical ease of preparing stabilizer states and the operational difficulty of discriminating them, thereby enriching our understanding of quantum resource theory and its operational structure. This not only clarifies the separation between axiomatic and operational notions of free operations in the resource theory of magic, but also rules out the possibility of stabilizer-state cloning using stabilizer operations. Our result also enables new applications, such as unconditional verification of nonstabilizerness of quantum circuits and quantum data hiding in magic.

These results open multiple avenues for future research directions. Determining tighter data-hiding ratios and quantifying the magic cost of state discrimination could provide practical tools for benchmarking non-Clifford gates~\cite{Erhad19characterizing, Hines23demonstrating, lee2025efficient}. Extending these phenomena to larger multi-qubit systems and identifying minimal indistinguishable sets would deepen the structural understanding of stabilizer-state discrimination. Moreover, it would be intriguing to design explicit schemes for data hiding in magic and to uncover unconditional quantum advantages in related computational tasks. We expect that addressing these problems will sharpen the role of magic as a resource for early fault-tolerant quantum algorithms~\cite{Katabarwa24early}. Finally, by highlighting that both entanglement and magic impose restrictions on free-state discrimination, our work raises a broad open question for quantum resource theory: under what general conditions is the discrimination of free states beyond the reach of operationally free operations?

\begin{acknowledgements}
This work is supported by the KIAS Individual Grant No. CG085302 at Korea Institute for Advanced Study. The author thanks Ryuji Takagi, Bartosz Regula, Seok Hyung Lie, and Minki Hhan for constructive discussions and insightful advice throughout this project.
\end{acknowledgements}

\bibliography{bib}
~\newpage \widetext
\begin{appendix}
\section{Complete proof of Theorem~\ref{thm:main}} \label{app:thm1}
Let us first show that the set of stabilizer states
\begin{equation}\label{eq:app_set}
\begin{aligned}
\{ \ket{+} \ket{1} \ket{0}, \ket{0} \ket{+} \ket{1}, \ket{1} \ket{0}\ket{+}, \ket{-} \ket{1} \ket{0}, \ket{0} \ket{-} \ket{1}, \ket{1} \ket{0}\ket{-}\}
\end{aligned}
\end{equation}
cannot be perfectly discriminated by any stabilizer operations without ancillary qubits. Let us define the stabilizer of a given stabilizer state $\ket{\psi}$ as ${\cal S}_{\ket{\psi}}$. Sets of stabilizer generators for each state in Eq.~\eqref{eq:app_set} are
\begin{equation}
\begin{aligned}
{\cal S}_{\ket{\pm}\ket{1}\ket{0}} &= \langle \pm X_1, - Z_2, Z_3 \rangle, \\
{\cal S}_{\ket{0}\ket{\pm}\ket{1}} &= \langle Z_1, \pm X_2, -Z_3 \rangle, \\
{\cal S}_{\ket{1}\ket{0}\ket{\pm}} &= \langle -Z_1, Z_2, \pm X_3 \rangle.
\end{aligned}
\end{equation}
Since $\bigcup_\mu {\cal S}_{\ket{\psi_\mu}}$ contains all single-qubit Pauli operators $X_k$ and $Z_k$ for $k=1,2,3$, any non-trivial $P \neq \mathbb{1}$ anticommutes with at least one element in $\bigcup_\mu {\cal S}_{\ket{\psi_\mu}}$.

We then consider two cases to cover all possible $P \neq \mathbb{1}$: (i) $P$ anticommutes with at least one of $Z_{1,2,3}$, and (ii) $P \neq \mathbb{1}$ commutes with all $Z_{1,2,3}$, so that it must anticommute with at least one of $X_{1,2,3}$. For each case, we have
\begin{enumerate}
\item[(i)] If $P Z_k = - Z_k P$ for some $k=1,2,3$, let $\ket{\psi_\mu}$ be the state stabilized by $Z_k$ and $\ket{\psi_{\mu'}}$ be the state stabilized by $-Z_k$. For example, if $P$ anticommutes with $Z_2$, we take $\ket{\psi_\mu} = \ket{1}\ket{0}\ket{+}$ and $\ket{\psi_{\mu'}} = \ket{+}\ket{1}\ket{0}$. Then, it is straightforward to check that $P\ket{\psi_\mu}$ and $\ket{\psi_{\mu'}}$ are non-orthogonal, as the $k$th qubits of these states are the same and the other qubits are in different $X$ and $Z$ bases, giving a non-zero overlap: $\langle \pm | 0 \rangle \neq 0$ and $\langle \pm | 1 \rangle \neq 0$.
\item[(ii)] If $P Z_{1,2,3} = Z_{1,2,3} P$ and $P X_k = - X_k P$ for some $k=1,2,3$, take two states $\ket{\psi_\mu}$ and $\ket{\psi_{\mu'}}$ that are stabilized by $X_k$ and $-X_k$, respectively. For example, if $P$ anticommutes with $X_1$, we take $\ket{\psi_\mu} = \ket{+}\ket{1}\ket{0}$ and $\ket{\psi_{\mu'}} = \ket{-}\ket{1}\ket{0}$. Since the remaining stabilizers of these two states are identical $Z$ operators, we have a perfect overlap between the two states: $P \ket{\psi_\mu} = \ket{\psi_{\mu'}}$.
\end{enumerate}
Hence, for any $P \neq \mathbb{1}$, one can always find a pair of quantum states $(\ket{\psi_\mu}, \ket{\psi_{\mu'}})$ such that $\bra{\psi_\mu} P \ket{\psi_\mu} = 0$ and $\bra{\psi_{\mu'}} P \ket{\psi_\mu} \neq 0$, which implies $|\bra{\psi_{\mu'}} P \ket{\psi_\mu}|^2 > 0$.

Now, let us consider the case with additional ancillary qubits. We note that any stabilizer-state ancilla can be expressed as $ U \ket{0}^{\otimes \ell}$ with a Clifford unitary $U\in {\rm Cl}_\ell$, which can be absorbed into the measurement operator $\Pi_a^P$. Hence, without loss of generality, we can take the ancillary qubits to be $\ket{0}^{\otimes \ell}$, and then prove the following lemma:
\addtocounter{lemma}{-1}
\begin{lemma} \label{lemma:anc_recover} 
Suppose that an $n$-qubit stabilizer state $\ket\psi$ is tensored by $\ell$-qubit ancilla states initialized in $\ket{0}^{\otimes \ell}$. For the measurement operator $\Pi^{P}_a = \frac{\mathbb{1} + (-1)^a P}{2}$, if $P \in {\cal P}_{n+\ell}$ anticommutes with the $Z$ operator of one of the ancilla states, there exists a Clifford unitary operator $U_a$ such that
\begin{equation}
U_a \Pi^{P}_a \left( \ket{\psi} \bra{\psi} \otimes \ket{0}\bra{0}^{\otimes \ell} \right) \Pi^{P}_a U^{\dagger}_a = \ket{\psi} \bra{\psi} \otimes \ket{0}\bra{0}^{\otimes \ell}, \quad a\in \{0,1\}.
\end{equation}
\end{lemma}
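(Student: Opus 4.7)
The plan is a direct, constructive argument. Since $P$ anticommutes with $Z_j$ for some ancilla index $j \in \{n+1,\dots,n+\ell\}$, the single-qubit tensor factor of $P$ on qubit $j$ must lie in $\{X_j, Y_j\}$. I would accordingly decompose $P = A \otimes R_j$, with $R_j \in \{X_j, Y_j\}$ and $A \in {\cal P}_{n+\ell-1}$ acting on all remaining qubits. A short computation then gives
\begin{equation*}
P \ket{\psi} \otimes \ket{0}^{\otimes \ell} = c \cdot A\ket{\psi, 0^{\ell \setminus j}} \otimes \ket{1}_j,
\end{equation*}
with $c = 1$ for $R_j = X_j$ and $c = i$ for $R_j = Y_j$. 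Because $\bra{0}R_j\ket{0} = 0$, this immediately yields $\bra{\psi, 0^\ell} P \ket{\psi, 0^\ell} = 0$, and hence each outcome $a \in \{0,1\}$ occurs with probability $1/2$; so the ``no information gain'' half of the statement comes out essentially for free.

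Next, I would compute the normalized post-measurement state
\begin{equation*}
\ket{\phi_a} = \frac{1}{\sqrt{2}}\bigl( \ket{\psi, 0^{\ell \setminus j}}\otimes\ket{0}_j + (-1)^a c \cdot A\ket{\psi, 0^{\ell \setminus j}}\otimes\ket{1}_j\bigr),
\end{equation*}
and recognize it as the image of $\ket{\psi, 0^{\ell \setminus j}} \otimes \ket{+}_j$ under the controlled operation $C_j^{V_a} := \ket{0}\bra{0}_j \otimes \mathbb{1} + \ket{1}\bra{1}_j \otimes V_a$, with $V_a := (-1)^a c A$ and qubit $j$ as the control. The natural candidate for the reversing Clifford is then
\begin{equation*}
U_a := H_j \cdot \bigl( C_j^{V_a}\bigr)^\dagger ,
\end{equation*}
and a direct check shows $U_a \ket{\phi_a} = \ket{\psi} \otimes \ket{0}^{\otimes \ell}$. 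Crucially, $U_a$ depends only on $P$, $j$, and $a$ and not on $\ket{\psi}$, so the same correction works uniformly for every input, which is exactly what is needed in the discrimination argument of Theorem~\ref{thm:main}.

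To close the proof I would verify that $U_a \in {\rm Cl}_{n+\ell}$. The Hadamard $H_j$ is Clifford. The controlled multi-qubit Pauli $C_j^A$ factorizes as a product of commuting single-target controlled single Paulis, each of which is a CNOT, CZ, CY, or identity, all Clifford. The remaining phase prefactor $(-1)^a c \in \{\pm 1, \pm i\}$ is absorbed using the identities $C_j^{-R} = Z_j \cdot C_j^R$ and $C_j^{\pm i R} = S_j^{\pm 1} \cdot C_j^R$, which involve only single-qubit Clifford gates on the control qubit $j$. The one place I would expect to spend the most care is exactly this phase bookkeeping—distinguishing the $R_j = X_j$ and $R_j = Y_j$ cases and propagating $c$ consistently through the Hermitian conjugate; once that is in hand the construction is mechanical, and in particular never uses that $\ket{\psi}$ itself is a stabilizer state.
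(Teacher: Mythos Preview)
Your proof is correct, and the reversing Clifford you construct in fact coincides with the paper's: the paper's operator $U = \tfrac{\mathbb{1}+Z_j}{2} + P X_j \tfrac{\mathbb{1}-Z_j}{2}$ simplifies exactly to your controlled gate $C_j^{cA}$ (since $P X_j = A\otimes R_j X_j$ equals $A$ when $R_j=X_j$ and $iA\otimes Z_j$ when $R_j=Y_j$), and the ``optional $X$ gate'' the paper appends at the end matches your $(-1)^a$-dependent $Z_j$ on the control via $H_j Z_j = X_j H_j$. The only real difference is how the reversal is \emph{verified}: the paper tracks stabilizer generators through the Gottesman--Knill update rule, whereas you compute the post-measurement state vector directly and recognize it as $C_j^{V_a}\bigl(\ket{\psi,0^{\ell\setminus j}}\otimes\ket{+}_j\bigr)$. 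Your route is more elementary and, as you observe, never uses that $\ket\psi$ is a stabilizer state, so it actually proves a slightly stronger statement; the paper's route meshes more seamlessly with the stabilizer language used elsewhere in the argument but formally leans on that hypothesis even though the conclusion does not need it.
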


\begin{proof}
Let $\ket\psi$ be stabilized by ${\cal S}_{\ket \psi} = \langle g_1, g_2, \dots, g_n \rangle$. Then, $\ket\psi \otimes \ket{0}^{\otimes \ell}$ is stabilized by
\begin{equation}
{\cal S}_{\ket\psi \otimes \ket{0}^\ell} = \langle g_1, \dots, g_n, Z_{n+1}, \dots, Z_{n+\ell} \rangle.
\end{equation}
Without loss of generality, suppose $P$ anticommutes with $Z_{n+k}$, i.e., $P Z_{n+k} = - Z_{n+k} P$. In this case, the measurement outcome $a$ is always balanced. The measurement process updates the stabilizer according to the Gottesman-Knill theorem:
\begin{enumerate}
\item Update $Z_{n+k}$ to $(-1)^a P$.
\item For $g^+ \in {\cal S}_{\ket\psi \otimes \ket{0}^\ell}/Z_{n+k}$ that anticommutes with $P$, update to $g^+ Z_{n+k}$.
\item For $g^- \in {\cal S}_{\ket\psi \otimes \ket{0}^\ell}/Z_{n+k}$ that commutes with $P$, leave unchanged.
\end{enumerate}

Consider the unitary
\begin{equation}
U = \frac{\mathbb{1} + Z_{n+k}}{2} + P X_{n+k} \left( \frac{\mathbb{1} - Z_{n+k}}{2} \right),
\end{equation}
which is a Clifford operation. The generators transform as
\begin{equation}
\begin{aligned}
U (g^+ Z_{n+k}) U^\dagger &= g^+, \\
U (g^- Z_{n+k}) U^\dagger &= g^-, \\
U (-1)^a P U^\dagger &= (-1)^a P X_{n+k} P.
\end{aligned}
\end{equation}
Since $P X_{n+k} P = \pm X_{n+k}$ depending on commutation, applying $U$ to the post-measurement state gives the stabilizer generators
\begin{equation}
\langle g_1, \dots, g_n, Z_{n+1}, \dots, Z_{n+k-1}, \pm (-1)^a X_{n+k}, Z_{n+k+1}, \dots, Z_{n+\ell} \rangle.
\end{equation}
Finally, applying a Hadamard $H_{n+k}$ and an optional $X$ gate (depending on $a$ and commutation) recovers $\ket\psi \otimes \ket{0}^{\otimes \ell}$. All operations are Clifford.
\end{proof}

From Lemma~\ref{lemma:anc_recover}, an informative measurement $\Pi^P_a$ must commute with every $Z_{n+k}$ of the ancilla states, i.e., $P = \tilde{P} \otimes \langle Z_{n+1}, \dots, Z_{n+\ell} \rangle$, where $\tilde{P}$ acts only on the system qubits. Otherwise, the outcome probability of $a$ is balanced (no information gain), and the state is fully recovered (no disturbance). Since $\Pi^P_a$ commutes with the ancilla $Z$ operators, the measurement acts solely on the system:
\begin{equation}
\Pi^P_a \left( \ket{\psi}\bra{\psi} \otimes \ket{0}\bra{0}^{\otimes \ell}\right) \Pi^P_a = \Pi^{\tilde P}_a \ket{\psi}\bra{\psi}\Pi^{\tilde P}_a \otimes \ket{0}\bra{0}^{\otimes \ell}.
\end{equation}
Hence, adding ancilla qubits does not increase the successful discrimination probability $p^{\rm succ}$, reducing the problem to the ancilla-free case. Since the ancilla-free case was already shown above, this completes the proof of Theorem~\ref{thm:main}.

\section{Complete proof of Theorem~\ref{thm:2qubit}} \label{app:thm2}
In order to complete the proof of Theorem~\ref{thm:2qubit}, we first show the following lemma:

\begin{lemma}
Suppose that two-qubit mutually orthogonal stabilizer states form a complete basis $\{\ket{\psi_\mu}\}_{\mu =1}^4$. The stabilizers of these states can always be expressed as
\begin{equation}
\begin{aligned}
{\cal S}_{\psi_1} &= \langle g_1, g_2 \rangle, \\
{\cal S}_{\psi_2} &= \langle g_1, -g_2 \rangle, \\
{\cal S}_{\psi_3} &= \langle -g_1, g_3 \rangle, \\
{\cal S}_{\psi_4} &= \langle -g_1, -g_3 \rangle.
\end{aligned}
\end{equation}
\end{lemma}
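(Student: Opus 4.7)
The plan is to establish the structure in two stages: first, to find a non-identity Pauli $g_1 \in \mathcal{P}_2$ such that two of the basis states have $+g_1$ in their stabilizer while the remaining two have $-g_1$; and second, to use the resulting splitting of the Hilbert space into the $\pm 1$ eigenspaces of $g_1$ to extract $g_2$ and $g_3$.

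For the first stage I would use completeness of the basis together with tracelessness of non-identity Paulis. Since $\sum_\mu \ket{\psi_\mu}\bra{\psi_\mu} = \mathbb{1}$, for any $P \in \mathcal{P}_2$ with $P \neq \mathbb{1}$ we have $0 = \mathrm{Tr}(P) = \sum_{\mu=1}^4 \bra{\psi_\mu} P \ket{\psi_\mu}$. Each term equals $\pm 1$ if $\pm P \in \mathcal{S}_{\psi_\mu}$ and $0$ otherwise, so the counts $n_P^\pm := \#\{\mu : \pm P \in \mathcal{S}_{\psi_\mu}\}$ always satisfy $n_P^+ = n_P^-$, leaving only the two non-trivial possibilities $(n_P^+, n_P^-) \in \{(1,1), (2,2)\}$ when $P$ appears at all.

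The hard part will be ruling out the scenario in which every non-identity Pauli appearing in some $\mathcal{S}_{\psi_\mu}$ has the $(1,1)$ split. Since each stabilizer contributes three non-identity elements, $\sum_P (n_P^+ + n_P^-) = 12$, so an all-$(1,1)$ configuration would give exactly six distinct Paulis, each shared between a unique pair of stabilizers with opposite signs, matching the $\binom{4}{2} = 6$ pairs of basis states to the six Paulis bijectively. Labeling $\mathcal{S}_{\psi_1} = \langle h_1, h_2\rangle$ so that $h_i$ is shared with $\mathcal{S}_{\psi_{i+1}}$ (with $h_3 = h_1 h_2$), the shared element $P_{23}$ between $\mathcal{S}_{\psi_2}$ and $\mathcal{S}_{\psi_3}$ would have to commute with $h_1$ (since $\pm h_1 \in \mathcal{S}_{\psi_2}$) and with $h_2$ (since $\pm h_2 \in \mathcal{S}_{\psi_3}$), hence with every element of $\mathcal{S}_{\psi_1}$. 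But the centralizer of a maximal abelian subgroup of $\mathcal{P}_n$ modulo phases is the subgroup itself, so $P_{23}$ would lie in $\{h_1, h_2, h_3\}$, contradicting the assumed distinctness of the six Paulis in the bijection. Therefore some Pauli $g_1$ must have the $(2,2)$ split.

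Given such a $g_1$, I relabel so that $g_1 \in \mathcal{S}_{\psi_1}\cap\mathcal{S}_{\psi_2}$ and $-g_1 \in \mathcal{S}_{\psi_3}\cap\mathcal{S}_{\psi_4}$, and choose a second generator $g_2$ so that $\mathcal{S}_{\psi_1} = \langle g_1, g_2 \rangle$. Since $\ket{\psi_1}$ and $\ket{\psi_2}$ are orthogonal stabilizer states both lying in the two-dimensional $+1$ eigenspace of $g_1$, they must be the two eigenstates of $g_2$ restricted to that eigenspace, which forces $-g_2 \in \mathcal{S}_{\psi_2}$ and hence $\mathcal{S}_{\psi_2} = \langle g_1, -g_2 \rangle$. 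Applying the same argument inside the $-1$ eigenspace of $g_1$, with $g_3$ chosen as a second generator of $\mathcal{S}_{\psi_3}$, yields $\mathcal{S}_{\psi_3} = \langle -g_1, g_3\rangle$ and $\mathcal{S}_{\psi_4} = \langle -g_1, -g_3\rangle$, completing the claimed form.
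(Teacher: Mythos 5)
Your proof is correct, but it takes a genuinely different route from the paper. The paper proves the lemma by a direct combinatorial case analysis: starting from ${\cal S}_{\psi_1}=\langle g_1,g_2\rangle$ and ${\cal S}_{\psi_3}=\langle -g_1,g_3\rangle$, it enumerates which sign-flipped generators can appear in ${\cal S}_{\psi_2}$ and ${\cal S}_{\psi_4}$, discharging each subcase (including the degenerate one where three generators would have to mutually commute) by relabeling. You instead extract a global constraint from completeness, $0=\Tr(P)=\sum_\mu \bra{\psi_\mu}P\ket{\psi_\mu}$, which forces every unsigned Pauli class to have a balanced $(1,1)$ or $(2,2)$ sign split across the four stabilizers; you then rule out the all-$(1,1)$ configuration by counting ($12$ incidences over $6$ classes matched to the $6$ orthogonality-forced pairs) together with the fact that the centralizer of a maximal abelian subgroup of the Pauli group modulo phases is itself, and finish with a clean eigenspace argument showing that two orthogonal states in the $+1$ eigenspace of $g_1$ must be the $\pm1$ eigenvectors of $g_2$ restricted there. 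Both arguments are sound; the paper's is more elementary and self-contained, while yours replaces the case analysis with structural counting, and its first ingredient (the trace identity for a complete orthonormal stabilizer basis) generalizes immediately to $n$ qubits, though the step eliminating the all-$(1,1)$ configuration still leans on two-qubit maximality. Two small expository points: you should state explicitly that surjectivity of the Pauli-to-pair assignment follows because orthogonality of two stabilizer states requires a sign-conflicting common stabilizer element, and that $g_2$ restricted to the $+1$ eigenspace of $g_1$ is traceless (from $\Tr\bigl(g_2(\mathbb{1}+g_1)\bigr)=0$), which is what guarantees it has one $+1$ and one $-1$ eigenvector there.
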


\begin{proof}
Without loss of generality, take ${\cal S}_{\psi_1} = \langle g_1, g_2 \rangle$ with $g_1 \neq g_2$. To construct mutually orthogonal states, the other states must have either $-g_1$ or $-g_2$ as a stabilizer. Again, without loss of generality, take ${\cal S}_{\psi_3} = \langle -g_1, g_3 \rangle$. Then, for $\psi_2$ to be orthogonal to $\psi_1$, we have the following cases:

\begin{enumerate}
\item ${\cal S}_{\psi_2}$ contains $-g_1$: In this case, ${\cal S}_{\psi_2}$ must also contain $-g_3$ to ensure that $\psi_2$ is orthogonal to $\psi_3$, giving ${\cal S}_{\psi_2} = \langle -g_1, -g_3 \rangle$. The remaining state $\psi_4$ must be orthogonal to both $\psi_2$ and $\psi_3$, which requires $g_1 \in {\cal S}_{\psi_4}$. To also be orthogonal to $\psi_1$, ${\cal S}_{\psi_4}$ must contain $-g_2$, yielding ${\cal S}_{\psi_4} = \langle g_1, -g_2 \rangle$. By relabeling $\psi_2 \leftrightarrow \psi_4$, we obtain the desired form.
\item ${\cal S}_{\psi_2}$ contains $-g_2$: Then ${\cal S}_{\psi_2}$ must contain $g_1$ or $-g_3$ to ensure orthogonality with $\psi_3$.
\begin{enumerate}
\item If ${\cal S}_{\psi_2} = \langle g_1, -g_2 \rangle$, then to make $\psi_4$ orthogonal to both $\psi_1$ and $\psi_2$, we require $-g_1 \in {\cal S}_{\psi_4}$. Finally, to make $\psi_4$ orthogonal to $\psi_3$, we have ${\cal S}_{\psi_4} = \langle -g_1, -g_3 \rangle$, which is the desired form.
\item If ${\cal S}_{\psi_2} = \langle -g_2, -g_3 \rangle$, this implies that $g_1, g_2, g_3$ mutually commute, which is impossible in a two-qubit system unless $g_2 = g_3$. Since $g_2 = -g_3$ is not allowed, the sets can be rewritten as ${\cal S}_{\psi_1} = \langle g_1, g_2 \rangle$, ${\cal S}_{\psi_2} = \langle -g_2, g_4 \rangle$, and ${\cal S}_{\psi_3} = \langle -g_1, g_2 \rangle$ with another generator $g_4$. Then ${\cal S}_{\psi_4}$ must contain $-g_2$ to be orthogonal to $\psi_1$ and $\psi_3$, and to be orthogonal to $\psi_2$, we have ${\cal S}_{\psi_4} = \langle -g_2, -g_4 \rangle$. After relabeling $g_2 \leftrightarrow g_1$ and $g_4 \rightarrow g_3$, and swapping $\psi_2 \leftrightarrow \psi_3$, we obtain the desired form.
\end{enumerate}
\end{enumerate}
\end{proof}
We can then always find a Clifford unitary $U$ such that $U g_1 U^\dagger = Z_1$ and $U g_2 U^\dagger = Z_2$, which transforms the first two states into $\ket{0}\ket{0}$ and $\ket{0}\ket{1}$. The remaining generator $g_3$ can either commute or anticommute with $g_2$. In the commuting case, we only have $U g_3 U^\dagger = \pm Z_2$, as $n$-qubit states can have at most $n$ mutually commuting stabilizers. Both possibilities yield
\begin{equation}
\{ U \ket{\psi_\mu} \}_{\mu =1}^4 = \{ \ket{0}\ket{0}, \ket{0}\ket{1}, \ket{1}\ket{0}, \ket{1}\ket{1} \}.
\end{equation}
In the anticommute case, $U g_3 U^\dagger$ can be $\pm X_2$ or $\pm Y_2$, all of which anticommute with $Z_2$. For $U g_3 U^\dagger = \pm Y_2$, we can apply the $S$-gate, $S = \left( \begin{matrix} 1 & 0  \\ 0 & i \end{matrix} \right) \in {\rm Cl}_2$, which maps $S^\dagger Y S = X$, yielding the desired form
\begin{equation}
\{ S^\dagger U \ket{\psi_\mu} \}_{\mu=1}^4 = \{ \ket{0}\ket{0}, \ket{0}\ket{1}, \ket{1}\ket{+}, \ket{1}\ket{-} \}.
\end{equation}
Finally, we can apply the adaptive measurement described in the main manuscript to discriminate these states.

\section{Complete proof of Theorem~\ref{thm:3nqubit}}\label{app:thm3}
We show that \textit{nonstabilizerness without magic} for $3$-qubit stabilizer states in Theorem~\ref{thm:main} can be generalized to $3n$ qubits by considering the following set of quantum states:
\begin{equation}
\big\{ \ket{\boldsymbol\alpha}_x \ket{\boldsymbol\beta_1}_z \ket{\boldsymbol\beta_0}_z,
\ket{\boldsymbol\beta_0}_z \ket{\boldsymbol\alpha}_x \ket{\boldsymbol\beta_1}_z, \ket{\boldsymbol\beta_1}_z \ket{\boldsymbol\beta_0}_z \ket{\boldsymbol\alpha}_x \big\}_{\boldsymbol{\alpha} \in \{ 0, 1\}^n, \boldsymbol{\beta}_0 \in {\cal F}_0, \boldsymbol{\beta}_1 \in {\cal F}_1},
\end{equation}
where $\ket{\boldsymbol{\alpha}}_x = \left( \bigotimes_{i=1}^n Z^{\alpha_i}_i \right) \ket{+}^{\otimes n}$ and $\ket{\boldsymbol{\beta}}_z = \bigotimes_{i=1}^n \ket{\beta_i}$, with ${\cal F}_0:= \{ \boldsymbol\beta_0 \,|\, f(\boldsymbol{\beta}_0) = 0\}$ and ${\cal F}_1 := \{ \boldsymbol\beta_1 \,|\, f(\boldsymbol{\beta}_1) = 1\}$ for a Boolean function $f: \{0,1\}^n \mapsto \{0,1\}$ with vanishing linear structure.

\begin{proof}
The proof proceeds similarly to Theorem~\ref{thm:main}. As Lemma~\ref{lemma:anc_recover} also applies here, we only need to consider the ancilla-free case. First, note that the union of all stabilizers in the set contains every $X_k$ and $Z_k$ for $k = 1, 2, \dots, 3n$, as in the $3$-qubit case. Hence, for any $P \in {\cal P}_{3n}$, $P$ contains at least one of $X_k$ or $Z_k$, except for the trivial case $P = \mathbb{1}$. We then consider two different cases: (i) $P Z_k = -Z_k P$ for some $k \in \{1, \dots, 3n\}$, and (ii) $P$ commutes with every local $Z$, implying that $P X_k = -X_k P$ for some $k \in \{1, \dots, 3n\}$.

\begin{enumerate}
\item[(i)] If $P Z_k = - Z_k P$ for some $k = 1, \dots, 3n$, then at least one of the following sets, ${\cal K}_1 = \{ Z_1, \dots, Z_n\}$, ${\cal K}_2 = \{ Z_{n+1}, \dots, Z_{2n}\}$, or ${\cal K}_3 = \{ Z_{2n+1}, \dots, Z_{3n}\}$, contains an element that anticommutes with $P$. Without loss of generality, suppose $Z_k \in {\cal K}_1$ anticommutes with $P$. Define an $n$-bit string $\boldsymbol\gamma = \gamma_1 \dots \gamma_n$ by setting $\gamma_i = 0$ if $Z_i$ commutes with $P$ and $\gamma_i = 1$ if $Z_i$ anticommutes with $P$. Note that $\boldsymbol\gamma \neq 0$. By the vanishing linear structure of $f$, for any non-zero $\boldsymbol\gamma$, there exists $\boldsymbol\beta_0$ such that $f(\boldsymbol\beta_0) = 0$ and $f(\boldsymbol\beta_0 + \boldsymbol\gamma) = 1$. Consider $\ket{\psi_\mu} = \ket{\boldsymbol\beta_0}_z \ket{\boldsymbol\alpha}_x \ket{\boldsymbol\beta_1}_z$. Applying $P$ gives
\begin{equation}
P \ket{\boldsymbol\beta_0}_z \ket{\boldsymbol\alpha}_x \ket{\boldsymbol\beta_1}_z = (-1)^s \ket{\boldsymbol\beta_0 + \boldsymbol\gamma}_z \ket{\boldsymbol\alpha'}_x \ket{\boldsymbol\beta_1'}_z,
\end{equation}
where the specific phase $s \in \{0,1\}$ and bit strings $\boldsymbol\alpha', \boldsymbol\beta_1'$ in other blocks are irrelevant for the argument. Since $f(\boldsymbol\beta_0 + \boldsymbol\gamma) = 1$, there exists $\boldsymbol\beta_1 \in {\cal F}_1$ such that $\boldsymbol\beta_1 = \boldsymbol\beta_0 + \boldsymbol\gamma$. Hence, there exists a pair $(\boldsymbol\beta_0, \boldsymbol\beta_1) \in {\cal F}_0 \times {\cal F}_1$ such that
\begin{equation}
\bra{\boldsymbol\beta_1}_z \bra{\boldsymbol\beta_0}_z \bra{\boldsymbol\alpha}_x P \ket{\boldsymbol\beta_0}_z \ket{\boldsymbol\alpha}_x \ket{\boldsymbol\beta_1}_z = (-1)^{s'} 2^{-n} \neq 0,
\end{equation}
as the overlap between $\ket{\boldsymbol\alpha}_x$ and $\ket{\boldsymbol\beta}_z$ is always $2^{-n/2}$ up to a phase. Thus, for $\ket{\psi_\mu} = \ket{\boldsymbol\beta_0}_z \ket{\boldsymbol\alpha}_x \ket{\boldsymbol\beta_1}_z$ and $\ket{\psi_{\mu'}} = \ket{\boldsymbol\beta_1}_z \ket{\boldsymbol\beta_0}_z \ket{\boldsymbol\alpha}_x$, we have $|\bra{\psi_{\mu'}} P \ket{\psi_\mu}| = 2^{-n} \neq 0$.

\item[(ii)] If $P Z_{1,\dots,3n} = Z_{1,\dots,3n} P$ and $P X_k = - X_k P$ for some $k$, similarly, at least one of the sets ${\cal K}_1^X = \{X_1,\dots,X_n\}$, ${\cal K}_2^X = \{X_{n+1},\dots,X_{2n}\}$, or ${\cal K}_3^X = \{X_{2n+1},\dots,X_{3n}\}$ contains an element that anticommutes with $P$. Without loss of generality, let $X_k \in {\cal K}_1^X$ anticommute with $P$, and define $\boldsymbol\delta = \delta_1 \dots \delta_n$ with $\delta_i = 0$ if $X_i$ commutes with $P$ and $\delta_i = 1$ if $X_i$ anticommutes with $P$. Take $\ket{\psi_\mu} = \ket{\boldsymbol\alpha}_x \ket{\boldsymbol\beta_1}_z \ket{\boldsymbol\beta_0}_z$, which is stabilized by $X_k$ operators in ${\cal K}_1^X$. Then, there exists another state $\ket{\psi_{\mu'}} = \ket{\boldsymbol\alpha + \boldsymbol\delta}_x \ket{\boldsymbol\beta_1}_z \ket{\boldsymbol\beta_0}_z$ such that $P \ket{\psi_\mu}$ perfectly overlaps with $\ket{\psi_{\mu'}}$, as all remaining $Z$ stabilizers are unchanged and the $X$ stabilizers match.

\end{enumerate}

Hence, for any $P \neq \mathbb{1}$, there exists a pair of quantum states $(\ket{\psi_\mu}, \ket{\psi_{\mu'}})$ such that $\bra{\psi_\mu} P \ket{\psi_\mu} = 0$ and $\bra{\psi_{\mu'}} P \ket{\psi_\mu} \neq 0$, which implies $|\bra{\psi_{\mu'}} P \ket{\psi_\mu}|^2 > 0$.
\end{proof}

\end{appendix}

\end{document}